\newcommand{\td}{\text{d}}
\newcommand{\R}{\mathbb{R}}
\newcommand{\C}{\mathbb{C}}
\theoremstyle{plain}
\newtheorem{theorem}{Theorem}[section]
\newtheorem{lemma}[theorem]{Lemma}
\newtheorem{prop}[theorem]{Proposition}
\newtheorem{proposition}[theorem]{Proposition}
\theoremstyle{definition}
\newtheorem{remark}[theorem]{Remark}
\newenvironment{myproof}[2] {\paragraph{\it Proof of {#1} {#2}.}}{\hfill$\square$}
\def\be{\begin{equation}}
\def\ee{\end{equation}}
\def\bea{\begin{eqnarray}}
\def\eea{\end{eqnarray}}
\def\xb{X^\flat}
\def\z{\zeta}
\def\zb{\bar{\zeta}}
\def\wb{\bar{w}}
\def\p{\partial}
\def\ov{\overline}
\newcounter{mnotecount}[section]
\renewcommand{\themnotecount}{\thesection.\arabic{mnotecount}}
\newcommand{\mnote}[1]
{\protect{\stepcounter{mnotecount}}$^{\mbox{\footnotesize
$
\bullet$\themnotecount}}$ \marginpar{
\raggedright\tiny\em
$\!\!\!\!\!\!\,\bullet$\themnotecount: #1} }
\begin{document}
\title[]{Intrinsic rigidity of extremal horizons}
\author{Maciej Dunajski}
\address{Department of Applied Mathematics and Theoretical Physics\\ 
University of Cambridge\\ Wilberforce Road, Cambridge CB3 0WA, UK.}
\email{m.dunajski@damtp.cam.ac.uk}
\author{James Lucietti}
\address{
School of Mathematics and Maxwell Institute for Mathematical Sciences\\ University of Edinburgh\\
King’s Buildings, Edinburgh, EH9 3JZ, UK.}
\email{j.lucietti@ed.ac.uk}


\begin{abstract}
We prove that the intrinsic geometry of compact cross-sections of any vacuum extremal horizon must admit a Killing vector field. 
If the cross--sections
are two--dimensional spheres, this implies that the most general solution is the extremal Kerr horizon and completes the classification of the associated near-horizon geometries.  The same results  hold with a cosmological constant. Furthermore, we also deduce that any non-trivial vacuum near-horizon geometry, with a non-positive cosmological constant, must have a Lie algebra of Killing vector fields that contains  $\mathfrak{sl}(2)\times \mathfrak{u}(1)$  in all dimensions under no symmetry assumptions.  We also show that,  if the cross-sections are two-dimensional, the horizon Einstein equation is equivalent to a single fourth order PDE for the K\"ahler potential, and that this equation is explicitly solvable on the sphere if the corresponding metric admits a Killing vector. 
\end{abstract}

\maketitle

\section{Introduction}

Hawking's rigidity theorem states that, under certain global assumptions, stationary (analytic) spacetimes containing a rotating black hole must be axisymmetric~\cite{Hawking:1971vc, Chrusciel:2008js} (see also~\cite{Chrusciel:2023onh}).  This theorem has been generalised to higher-dimensional spacetimes and guarantees that a stationary rotating black hole spacetime possesses (at least) one axial Killing field~\cite{Hollands:2006rj, Moncrief:2008mr}.  While these theorems were first proven for non-extremal black holes, they have been extended to extremal black holes in four-dimensions and in higher-dimensions under a certain genericity assumption~\cite{Hollands:2008wn}.   
The basic strategy behind these proofs is to demonstrate that the event horizon is a Killing horizon, by finding a preferred foliation of a neighbourhood of the horizon by codimension-2 cross-sections. Therefore, although these theorems imply that the horizon is axisymmetric, the proofs draw upon properties of the spacetime extrinsic to the horizon.

On the other hand, as is well known, the Einstein equations for an $(n+2)$-dimensional spacetime that contains an extremal Killing horizon imply that the intrinsic geometry of the horizon decouples from the extrinsic geometry. The  geometry of such a horizon induced on a cross-section is described by an $n$-dimensional  Riemannian manifold $(M,g)$ and  a vector field $X \in \mathfrak{X}(M)$ that satisfy the horizon Einstein equation
\be
\label{eq:HEE}
\mbox{Ric}(g)=\frac{1}{2}X^\flat\otimes X^\flat
-\frac{1}{2}{\mathcal L}_{X} g+\lambda g  \; ,
\ee
where $\text{Ric}(g)$ is the Ricci tensor of $g$, ${\mathcal L}_X$ is the Lie derivative,
the one--form $X^\flat$ is $g$--dual to $X$ with respect to the metric $g$ and $\lambda$ is the cosmological constant~\cite{Kunduri:2013gce}. The same equation also arises for extremal isolated horizons~\cite{Lewandowski:2002ua}.  Indeed, this equation is satisfied by the intrinsic geometry of any extremal horizon regardless of whether it arises from a black hole spacetime.  It is also equivalent to the Einstein equations for the near-horizon geometry, which is an associated spacetime that arises from a near-horizon limiting procedure on the parent spacetime~\cite{Kunduri:2013gce} (see also~\cite{horowitz}). In this paper we shall be interested in the intrinsic geometry of $n$-dimensional  compact Riemannian manifolds $(M, g)$ that satisfy the associated elliptic system (\ref{eq:HEE}).  We refer to a solution as {\it non-trivial} if the vector field $X$ is non-identically vanishing (otherwise $(M,g)$ is Einstein). In particular, we consider whether there is  a purely intrinsic version of the rigidity theorem for extremal horizons, in other words, do all non-trivial solutions to (\ref{eq:HEE}) admit a Killing vector field?   

Numerous classification results have been obtained for the equation (\ref{eq:HEE}) under a variety of symmetry assumptions~\cite{Kunduri:2013gce}. For instance, solutions that obey $\td X^\flat=0$, which are equivalent to the near-horizon geometry being static, must be trivial if $n=2$ (four spacetime dimensions), or either trivial or the product of a circle with an Einstein metric if $\lambda \leq 0$~\cite{Chrusciel:2005pa, Bahuaud:2022iao, Wylie:2023pwf}.  Perhaps the most notable classification result states that for $n=2$ dimensions  the general solution  to \eqref{eq:nhE} on $S^2$ that admits an axial Killing field (that preserves $X$) is isometric to the horizon geometry of the extremal Kerr black hole~\cite{Lewandowski:2002ua, Kunduri:2008rs} (see also~\cite{Haj}). Explicitly, for $\lambda=0$, this geometry is
\begin{align}
\label{kerr_exact}
&g= \frac{a^2(1+x^2) \td x^2}{1-x^2} + \frac{4a^2(1-x^2) \td \phi^2}{1+x^2} , \\  &\xb= \frac{K^\flat -\td \Gamma}{\Gamma}, \quad \Gamma=\frac{1+x^2}{2}, \quad K= \frac{1}{2a^2} \partial_\phi\nonumber
\end{align}
where $a>0$ is a constant (the rotation parameter), $-1\leq x\leq 1$ and $\partial_\phi$ is a $2\pi$-periodic axial Killing vector field. It has been an outstanding open problem to determine whether  spherical topology of $M$ together with the equations (\ref{eq:HEE}) imply the existence of a Killing vector field. This would in turn imply 
that if $M=S^2$, then all solutions to (\ref{eq:HEE}) arise from some extremal Kerr solution.  Until now, the closest result in this direction has been established by Chru\'sciel, Szybka and Tod \cite{CST} who,
building on an earlier work \cite{JK}, proved the uniqueness of the Kerr horizon (\ref{kerr_exact})
in its neighbourhood in the space of all solutions of the horizon equation (\ref{eq:HEE}).

 In this paper we show that any non-trivial solution to (\ref{eq:HEE}) on a compact surface must in fact admit a Killing field.  
 We will actually prove much more: the existence of a Killing field holds in any dimension and also with a cosmological
 constant. Our main result is the following.
 \begin{theorem}
 \label{maintheo0}
 Let $(M, g)$ be an $n$-dimensional compact, orientable\footnote{If $M$ is not orientable then the pair $(g, X^\flat)$ can be pulled back to the oriented covering space $\hat{M}$ of $M$. The Killing vector field on $\hat{M}$ resulting from Theorem \ref{maintheo0} can then be pushed forward to a Killing vector of  $(M, g)$.}, Riemannian
 manifold without boundary admitting  a non-gradient vector field $X$   such that (\ref{eq:HEE}) holds. Then there exists a Killing vector field $K$ of $(M, g)$  such that $K^\flat=\Gamma X^\flat+\td \Gamma$ where $\Gamma$ is a smooth positive function. Furthermore,  if either (i) $\lambda\leq 0$, or (ii) $n=2$ and $\lambda$ is arbitrary, then $[K, X]=0$. 
 \end{theorem}
 This result is complementary to the aforementioned classification of static horizons which correspond to $X^\flat$ closed.
The proof of Theorem \ref{maintheo0} makes use of the existence of a principal eigenvector of a certain second order elliptic operator \cite{Andersson:2007fh, Lucietti:2012sf}, together with a remarkable tensor identity (\ref{eq:identity})
 which we shall establish in Proposition \ref{prop}. The principal eigenvector will give a candidate for a Killing vector $K$, and the identity will allow to reduce the $g$--norm 
 $|{\mathcal L}_X g|^2$ to a total divergence.
 Thus its integral over a closed manifold $M$
 vanishes which implies that $K$ is a solution to the Killing equations.   The identity then implies that the vector field $X$ is also invariant under the Killing field provided $\lambda \leq 0$, thus  establishing part (i) of the theorem.  This latter condition is significant because it implies that $K$ extends to a Killing field of the associated near-horizon geometry.
 It is worth emphasising that the validity of the identity (\ref{eq:identity}) is based on several cancellations which depend crucially on the precise numerical factor appearing in the quadratic in $X$ term of the horizon equation (\ref{eq:HEE}). For $n=2$  a more detailed analysis of the horizon equation establishes part (ii) of the theorem.

We will also show that if the dimension $n=2$, the system (\ref{eq:HEE}) reduces to a single fourth--order non--linear PDE
for the K\"ahler potential of $g$ with the vector field $X$ determined purely in terms of this potential.
This equation can introduced in elementary terms and using only
the vector calculus on the plane. 
Let 
$f:\R^2\rightarrow \R$, and let $\nabla$ and $\Delta=\nabla\cdot\nabla$ 
be the nabla operator and the Laplacian of the flat Euclidean metric 
respectively. Then the PDE reads
\be
\label{vector_pde}
\frac{1}{2}{\Delta}^2 f-\frac{1}{2}\frac{|\nabla(\Delta f)|^2}{\Delta f}+
2\frac{(\Delta f)^3}{|\nabla f|^2}+\Delta f\nabla f\cdot
\nabla\Big( \frac{\Delta f}{|\nabla f|^2} \Big)+\lambda(\Delta f)^2=0.
\ee
We prove that (\ref{vector_pde})
is (locally) equivalent to (\ref{eq:HEE}) where the function $f$ is the K\"ahler
potential for $g$.  While establishing, or ruling out the integrability of (\ref{vector_pde}) is an interesting open problem,
there is  some evidence for its solvability: If $g$ admits 
a Killing vector field (as guaranteed by Theorem \ref{maintheo0} for compact $M$), then a reciprocal 
change of variables  can be 
used to linearise (\ref{vector_pde}), and find all solutions explicitly on $S^2$, therefore providing an alternate derivation of the extremal Kerr horizon. 

Theorem \ref{maintheo0} has a number of corollaries.  The most important one, alluded to above, is as follows.

\begin{theorem} \label{cor1}The extremal Kerr horizon (possibly with cosmological constant) is the unique solution to \eqref{eq:HEE} on $M=S^2$.
\end{theorem}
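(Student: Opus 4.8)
The plan is to reduce Theorem~\ref{cor1} to Theorem~\ref{maintheo0} together with the existing classification of axisymmetric solutions on $S^2$, treating separately the cases in which $X$ is, or is not, a gradient. Set $n=2$ and $M=S^2$. I would first dispose of the gradient case, which since $H^1(S^2)=0$ is equivalent to $\td X^\flat=0$, i.e.\ the static case. For this case the results quoted in the introduction \cite{Chrusciel:2005pa, Bahuaud:2022iao, Wylie:2023pwf} force the solution to be trivial, $X\equiv 0$, so that (\ref{eq:HEE}) collapses to $\mbox{Ric}(g)=\lambda g$ and $g$ is a round (Einstein) metric on $S^2$. By Gauss--Bonnet this is impossible when $\lambda\le 0$, and for $\lambda>0$ it yields the round sphere, which one should identify with the non-rotating limit of the extremal Kerr--de~Sitter horizon.

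The substantive case is $X$ non-gradient, where Theorem~\ref{maintheo0} applies directly: since $n=2$, part (ii) produces a Killing vector field $K$ with $[K,X]=0$. The next step is to check that $K$ is genuinely useful, i.e.\ $K\not\equiv 0$; this should follow from the construction, in which $K$ arises from the principal eigenvector of the elliptic operator of \cite{Andersson:2007fh, Lucietti:2012sf}, whose principal eigenfunction may be normalised to be nowhere vanishing. A non-trivial Killing field on the closed surface $S^2$ cannot itself be a gradient, since a Killing gradient has vanishing Hessian and hence is parallel, which on $S^2$ forces it to vanish. Therefore the flow of $K$ generates a one-parameter group of isometries whose closure in the isometry group is $SO(2)$, so $K$ is an axial Killing field with periodic orbits.

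With a non-trivial axial Killing field $K$ satisfying $\mathcal{L}_K X=0$ now in hand, the data $(g,X)$ meets exactly the hypotheses of the known $S^2$ classification \cite{Lewandowski:2002ua, Kunduri:2008rs}: any solution of (\ref{eq:HEE}) on $S^2$ admitting an axial Killing field that preserves $X$ is isometric to the extremal Kerr horizon (\ref{kerr_exact}), or its cosmological-constant generalisation. Invoking that theorem identifies the solution and completes the classification.

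The main obstacle here is not in any of the concluding steps but in supplying the axial Killing field without assuming symmetry a priori, which is precisely what Theorem~\ref{maintheo0} achieves through the identity (\ref{eq:identity}) and the principal-eigenvector construction. Granting that theorem, the only genuinely new verifications are (a) that the eigenvector $K$ is non-trivial, and (b) the book-keeping identifying the degenerate $X\equiv 0$, $\lambda>0$ solution with the non-rotating member of the extremal Kerr--de~Sitter family; everything else is an application of results already established in the literature.
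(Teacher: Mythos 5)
Your proposal is correct and follows essentially the same route as the paper: dispose of the gradient (equivalently, on $S^2$, static) case via the known classification~\cite{Chrusciel:2005pa, Bahuaud:2022iao}, and in the non-gradient case invoke Theorem~\ref{maintheo0}(ii) to produce a Killing field commuting with $X$, then apply the axisymmetric classification of~\cite{Lewandowski:2002ua, Kunduri:2008rs}. Your verification that $K$ generates a circle action (hence is genuinely axial) fills in a detail the paper leaves implicit; the only inaccuracy is your justification of $K\not\equiv 0$, which does not follow from positivity of the principal eigenfunction $\Gamma$ but rather from the observation, already in the proof of Theorem~\ref{th:main}, that $K\equiv 0$ would force $X=-\td(\log\Gamma)$ to be a gradient.
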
 

For $n=2$, the Gauss-Bonnet theorem implies that the only possible non-trivial solutions to (\ref{eq:HEE}) for $\lambda \geq 0$ are on $M=S^2$.  For $\lambda<0$, although higher genus surfaces are possible, it has also been shown that the only solution is trivial~\cite{Dobkowski-Rylko:2018nti}. Therefore, together with the classification of solutions for which $X$ is a gradient field~\cite{Chrusciel:2005pa}, our result completes the classification of vacuum extremal horizons with  two-dimensional compact cross-sections for arbitrary cosmological constant.

Another important corollary to our main theorem is a symmetry enhancement result for the corresponding near-horizon geometries.
\begin{theorem} \label{cor2}
Any non-trivial vacuum near-horizon geometry with $\lambda\leq 0$ and a compact orientable cross-section has a Lie algebra of Killing vector fields that contains $\mathfrak{sl}(2)\times \mathfrak{u}(1)$.
\end{theorem}

This result is a substantial generalisation of the  near-horizon symmetry enhancement theorem first proven under the additional assumption that $(M, g, X)$ admits a $U(1)^{n-1}$-axial symmetry~\cite{Kunduri:2007vf, Lucietti:2012sa}.
The proof relies on the existence of a first integral to (\ref{eq:HEE}) under the assumptions of Theorem \ref{maintheo0}.  This is enough to show that one can write the near-horizon geometry as a warped fibration of $M$ over (a patch of) $2$-dimensional anti de Sitter space (AdS$_2$) and that this structure inherits the Killing fields of the AdS$_2$ base space. 
This result generalises to $\lambda>0$ if the cross-sections are two-dimensional, although in that case we already know the near-horizon geometry is that of extremal Kerr de Sitter from Theorem \ref{cor1}.  For $\lambda>0$ and $n>2$, we have not been able to prove that the vector field $X$ is invariant under the Killing field  $K$ of Theorem \ref{maintheo0}. This would be required to establish a version of Theorem \ref{cor2} for $\lambda>0$ in all dimensions.   

\subsection*{Note added.}
{\em Since this work was completed, the invariance condition $[K, X]=0$ has been proven for any $\lambda$ and in all dimensions~\cite{Colling:2024usk}. Moreover, Theorem \ref{maintheo0} has been extended to Einstein--Maxwell
theory (thus proving the intrinsic rigidity of extremal Kerr--Newman)~\cite{Colling:2024txz}. The tensor identity (7)
has been strengthened~\cite{Kaminski:2024tob}.}

\section{Existence of Killing field}
\label{section1}
In this section we will establish our main result. We first prove a general identity for solutions to the system (\ref{eq:HEE}), which allows one to deduce the existence of a Killing field. We then show the system admits a first integral, which implies the corresponding near-horizon geometry possesses an enhanced isometry group.

\subsection{General identity}
Consider the near--horizon equation (\ref{eq:HEE})
written in a tensor form
\begin{equation}
R_{ab}= \frac{1}{2} X_a X_b - \nabla_{(a} X_{b)}+\lambda g_{ab} \; . \label{eq:nhE}
\end{equation}
We denote the $g$-norm of any tensor by $| \cdot |$ and the Laplacian by $\Delta := \nabla^a \nabla_a$. 
Inspired by the Kerr solution (\ref{kerr_exact}), for any smooth positive function $\Gamma:M \to \mathbb{R}^+$ we introduce a vector field $K$ defined by 
\begin{equation}
K_a:= \Gamma X_a+ (\td \Gamma)_a \; .
 \label{eq:Xdecomp}
\end{equation}
Of course, there is no loss in generality in doing this since we have not specified $\Gamma$. Shortly, we will make a specific choice of $\Gamma$, but for now there are no restrictions on it.  Inverting gives $X_a= (K_a- \nabla_a\Gamma)/\Gamma$ and the horizon equation (\ref{eq:nhE}) then becomes
\begin{equation}
R_{ab}= \frac{K_a K_b }{2\Gamma^2} -\frac{(\nabla_a \Gamma)( \nabla_b\Gamma)}{2 \Gamma^2}- \frac{1}{\Gamma}\nabla_{(a} K_{b)}+\frac{1}{\Gamma} \nabla_a\nabla_b \Gamma+\lambda g_{ab}  \; . \label{eq:nhE2}
\end{equation}
We will now state our general identity.

\begin{proposition}\label{prop}
For any solution to (\ref{eq:nhE2}) the following identity holds
\begin{align}
\nabla_{(a} K_{b)} \nabla^a K^b &= \nabla^a \left( K^b \nabla_{(a} K_{b)} -\tfrac{1}{2}K_a\Delta\Gamma- \tfrac{1}{2} K_a \nabla_b K^b- \lambda \Gamma K_a \right)  \label{eq:identity} \\ \nonumber
&+ \nabla_a K^a \left(  -\frac{1}{2 \Gamma} |K|^2 +\frac{1}{2}\Delta\Gamma +\frac{1}{2} \nabla_b K^b + \frac{1}{2\Gamma}K^b\nabla_b\Gamma +\lambda \Gamma\right).
\end{align}
\end{proposition}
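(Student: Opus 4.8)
The identity (\ref{eq:identity}) is a pointwise on-shell statement, so the plan is to verify it directly by manipulating the left-hand side with integration by parts and curvature commutators, using (\ref{eq:nhE2}) to eliminate all curvature. Since $\nabla_{(a}K_{b)}$ is symmetric, $\nabla_{(a}K_{b)}\nabla^aK^b=\nabla_{(a}K_{b)}\nabla^{(a}K^{b)}$, and the Leibniz rule gives
\begin{equation}
\nabla_{(a}K_{b)}\nabla^aK^b=\nabla^a\!\big(K^b\nabla_{(a}K_{b)}\big)-K^b\nabla^a\nabla_{(a}K_{b)}.
\end{equation}
The first term is precisely the leading divergence in (\ref{eq:identity}), so the task reduces to showing that $-K^b\nabla^a\nabla_{(a}K_{b)}$ equals the three remaining divergence terms plus the displayed $\nabla_cK^c$-multiple.

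I would evaluate $\nabla^a\nabla_{(a}K_{b)}$ by first multiplying (\ref{eq:nhE2}) by $\Gamma$ and rearranging into the solved form
\begin{equation}
\nabla_{(a}K_{b)}=\frac{K_aK_b}{2\Gamma}-\frac{(\nabla_a\Gamma)(\nabla_b\Gamma)}{2\Gamma}+\nabla_a\nabla_b\Gamma+\lambda\Gamma g_{ab}-\Gamma R_{ab},
\end{equation}
and then taking its divergence. Two curvature terms are generated: one from commuting derivatives in $\nabla^a\nabla_a\nabla_b\Gamma=\nabla_b\Delta\Gamma+R_{bc}\nabla^c\Gamma$, and one from $-\nabla^a(\Gamma R_{ab})=-R_{ab}\nabla^a\Gamma-\tfrac12\Gamma\nabla_bR$ via the contracted Bianchi identity. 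The crucial point is that the explicit Ricci pieces $R_{bc}\nabla^c\Gamma$ cancel in pairs, leaving only the scalar-curvature gradient $\nabla_bR$, which I would replace using the trace of (\ref{eq:nhE2}). In this organisation no curvature survives and, notably, no quartic $|K|^4/\Gamma^2$ term is ever produced; equivalently, had I instead used the Bochner identity $\nabla^a\nabla_{(a}K_{b)}=\tfrac12\Delta K_b+\tfrac12\nabla_b(\nabla_cK^c)+\tfrac12 R_{bc}K^c$ directly, the quartic generated by the $K_aK_b$ term of $R_{bc}K^c$ would have to cancel against the on-shell value of $K^b\Delta K_b$. Either way the cancellation hinges on the exact coefficient $\tfrac12$ in the $X\otimes X$ term of the horizon equation.

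After eliminating the curvature, the remainder is a polynomial in $K,\Gamma$ and their first and second derivatives, and the final step is purely organisational. I would contract with $-K^b$ and repeatedly integrate by parts, peeling off the total divergences $\nabla^a(K_a\Delta\Gamma)$, $\nabla^a(K_a\nabla_cK^c)$ and $\nabla^a(\Gamma K_a)$, so that every term not lying inside a divergence acquires a common factor $\nabla_cK^c$ and collects into the displayed scalar $-\tfrac{1}{2\Gamma}|K|^2+\tfrac12\Delta\Gamma+\tfrac12\nabla_cK^c+\tfrac{1}{2\Gamma}K^c\nabla_c\Gamma+\lambda\Gamma$. In particular I would verify that the $\nabla_b(\nabla_cK^c)$, $(\nabla_cK^c)^2$, $\Delta\Gamma\,\nabla_cK^c$ and $\lambda\Gamma\,\nabla_cK^c$ contributions each cancel between the two sides.

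I expect the main obstacle to be purely the bookkeeping of the cubic-in-$K$ terms together with the numerous terms carrying $\nabla_a\nabla_b\Gamma$ and $R_{ab}\nabla^a\Gamma$: the identity closes only because of several coefficient-sensitive cancellations. A useful check I would perform first is the specialisation to constant $\Gamma$, where $K=\Gamma X$, all Hessian-of-$\Gamma$ terms drop out, and (\ref{eq:identity}) reduces to a short verification in which the delicate cancellation is transparent; this confirms that the coefficients carry through to the general case.
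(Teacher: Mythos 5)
Your proposal is correct and takes essentially the same route as the paper: the same Leibniz decomposition $\nabla_{(a}K_{b)}\nabla^aK^b=\nabla^a\bigl(K^b\nabla_{(a}K_{b)}\bigr)-K^b\nabla^a\nabla_{(a}K_{b)}$, the contracted Bianchi identity applied to (\ref{eq:nhE2}), the curvature commutator for the third derivative of $\Gamma$, and contraction with $K^b$ followed by reorganisation into divergences plus $\nabla_cK^c$ multiples. The only cosmetic difference is that you multiply through by $\Gamma$ before taking the divergence, so the two Ricci-gradient terms cancel pairwise, whereas the paper works with the $1/\Gamma$-weighted form and removes the corresponding $(\nabla^a\Gamma)\nabla_{(a}K_{b)}$ term by re-substituting the horizon equation into the commutator-generated Ricci term.
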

\begin{proof}
First note that we can write
\begin{equation}
\nabla_{(a} K_{b)} \nabla^a K^b = \nabla^a \left( K^b \nabla_{(a} K_{b)} \right)  - K^b \nabla^a \nabla_{(a} K_{b)}  \; .   \label{eq:Killingeqsq}
\end{equation}
To compute the second term we use the contracted Bianchi identity $\nabla^a(R_{ab}-\tfrac{1}{2}R g_{ab})=0$ applied to (\ref{eq:nhE2})  to obtain an expression for $ \nabla^a \nabla_{(a} K_{b)}$. The result is rather  messy and given by
\begin{align}
 \nabla^a \nabla_{(a} K_{b)}  &= \frac{1}{\Gamma} (\nabla^a\Gamma )\nabla_{(a} K_{b)} +\frac{1}{2\Gamma} K_b \nabla_c K^c +\frac{1}{2\Gamma} K^a \nabla_a K_b  - \frac{1}{\Gamma^2} K_b K^c \nabla_c\Gamma  \label{eq:calc1}  \\  \nonumber
 &- \frac{1}{2\Gamma} (\Delta \Gamma)  \nabla_b \Gamma -\frac{3}{2\Gamma} (\nabla^a\Gamma)\nabla_a \nabla_b\Gamma  + \frac{1}{\Gamma^2} | \nabla \Gamma|^2 \nabla_b \Gamma +\nabla^a \nabla_b \nabla_a \Gamma  \\ \nonumber
 & -\tfrac{1}{2} \Gamma \nabla_b \left( \frac{1}{2\Gamma^2} |K|^2 - \frac{1}{2\Gamma^2} |\nabla \Gamma|^2-\frac{1}{\Gamma} \nabla_c K^c  +\frac{1}{\Gamma} \Delta \Gamma \right)  \; .
\end{align}
The triple derivative of $\Gamma$ term can be evaluated using the identity $[\nabla_a, \nabla_b ] V^a= R_{ab} V^a$ and (\ref{eq:nhE2}) again  (this in particular cancels the $(\nabla^a\Gamma )\nabla_{(a} K_{b)}$ term in (\ref{eq:calc1})).  Next, contracting (\ref{eq:calc1}) with $K^b$ one finds that all terms can be written either as divergences, or proportional to $K^b\nabla_b \Gamma$, or proportional to $\nabla_b K^b$. Remarkably many cancellations occur resulting in the claimed identity (\ref{eq:identity}). Details of this calculation are given in Appendix \ref{app:mainidentity}. 
\end{proof}

We will now make a specific choice of the function $\Gamma$, which has been previously used to establish instability of extremal horizons~\cite{Lucietti:2012sf} (see also \cite{DGS}
for an application to  3-dimensional horizons).
We will repeat the proof in the present notation for
convenience (see also Appendix of \cite{tod} for an alternate
proof). Note that this result does not require (\ref{eq:nhE}) to hold. In fact, the next two results (Lemma \ref{lemma} and Theorem \ref{th:main}) involve a  global integration so we shall assume that $M$ is orientable, and chose an orientation given by a volume form parallel with respect to the metric $g$.
\begin{lemma}\label{lemma}
Given any vector field $X$ on a compact, oriented, Riemannian manifold $(M,g)$ there exists (a unique up to scale) smooth function $\Gamma>0$ such that $\nabla_a K^a=0$ where $K$ is defined by \eqref{eq:Xdecomp}.
\end{lemma}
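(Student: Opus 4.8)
The plan is to recognise the condition $\nabla_a K^a=0$ as a linear second-order elliptic equation for $\Gamma$ and then to solve it using the Perron--Frobenius (Krein--Rutman) theory of the principal eigenvalue of a non-self-adjoint elliptic operator. First I would expand the divergence using $K^a=\Gamma X^a+\nabla^a\Gamma$, which gives
\begin{equation}
\nabla_a K^a = \Delta\Gamma + X^a\nabla_a\Gamma + (\nabla_a X^a)\,\Gamma =: L\Gamma .
\end{equation}
Thus the desired function is precisely a positive element of $\ker L$, where $L$ is the uniformly elliptic operator with principal part $\Delta$, drift $X$ and zeroth-order coefficient $\nabla_a X^a$. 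Note that $L$ is genuinely non-self-adjoint as soon as $X$ is present, so one cannot appeal to the usual self-adjoint spectral theory.

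The key observation is to compute the formal $L^2(M,\mathrm{d}V_g)$-adjoint of $L$. Integrating by parts twice on the closed manifold $M$, so that all boundary terms vanish, one finds that the first-order term contributes exactly the piece needed to cancel the zeroth-order coefficient, leaving
\begin{equation}
L^*\psi = \Delta\psi - X^a\nabla_a\psi ,
\end{equation}
an operator with \emph{no} zeroth-order part. In particular $L^*$ annihilates the positive constant function, $L^*1=0$. This is the structural reason that the divergence combination $\nabla_a K^a$, rather than any other first-order quantity built from $K$, is the one that can always be arranged to vanish.

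I would then invoke the standard theory of the principal eigenvalue for second-order elliptic operators on a compact manifold \cite{Andersson:2007fh}: both $L$ and $L^*$ admit a real principal eigenvalue $\mu_1$, which is simple, is shared by the operator and its adjoint, and is the unique eigenvalue possessing a nowhere-vanishing (positive) eigenfunction. Since the positive constant function is an eigenfunction of $L^*$ with eigenvalue $0$, this uniqueness characterisation forces the principal eigenvalue of $L^*$, and hence of $L$, to equal $\mu_1=0$. The corresponding principal eigenfunction of $L$ is therefore a positive $\Gamma$ solving $L\Gamma=0$, i.e. $\nabla_a K^a=0$; smoothness follows from elliptic regularity, and uniqueness up to scale follows from the simplicity of the principal eigenvalue.

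The main obstacle here is conceptual rather than computational: it lies in correctly identifying the adjoint and in appealing to the right black box. One must check with care that the drift term in $L$ produces precisely the contribution that removes the zeroth-order coefficient in $L^*$, and one must use the nontrivial fact from Krein--Rutman/Perron--Frobenius theory that a sign-definite eigenfunction can only belong to the principal eigenvalue (and that this eigenvalue coincides for $L$ and $L^*$). Given these two inputs, positivity, smoothness and uniqueness of $\Gamma$ all follow at once, and the construction is independent of whether \eqref{eq:nhE} holds, exactly as claimed.
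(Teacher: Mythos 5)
Your proposal is correct, and at its core it is the same argument as the paper's: the same operator (your $L\Gamma=\nabla_aK^a$ is, up to an overall sign, the paper's $L\psi=-\nabla^a(\nabla_a\psi+X_a\psi)$), the same appeal to the principal-eigenvalue theory of \cite{Andersson:2007fh} for existence of a real, simple principal eigenvalue with a positive eigenfunction, and the same conclusions on smoothness and uniqueness up to scale. The one place you genuinely diverge is in pinning down that the principal eigenvalue is zero. The paper does this in a single elementary step: integrating the eigenvalue equation $L\psi=\mu\psi$ over the closed manifold annihilates the left-hand side (it is a pure divergence), and $\int_M\psi>0$ then forces $\mu=0$. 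You instead pass to the formal adjoint $L^*\psi=\Delta\psi-X^a\nabla_a\psi$, observe $L^*1=0$, and then invoke two additional nontrivial inputs from Krein--Rutman theory: that a sign-definite eigenfunction can only belong to the principal eigenvalue, and that $L$ and $L^*$ share their principal eigenvalue. Both inputs are standard and correctly applied, so your argument closes; but note that your own adjoint computation already contains the shorter route, since pairing the eigenvalue equation against the constant function gives $\mu\int_M\psi \,=\langle 1,L\psi\rangle=\langle L^*1,\psi\rangle=0$ directly, with no need for the extra spectral facts about $L^*$. In short: your proof is valid, identifies the correct structural reason the divergence combination is the right one, but spends two black boxes where the paper spends none beyond the existence of the principal eigenfunction itself.
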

\begin{proof}
Define the second order elliptic operator on $(M, g)$ by 
\begin{equation}
L \psi := -\nabla^a(\nabla_a \psi+  X_a \psi)
\end{equation}
where $\psi \in C^\infty (M)$.   According to \cite[Lemma 4.1]{Andersson:2007fh}, for $M$ compact, there exists a principal eigenvalue $\mu$ which is real and less than or equal to the real part of any other eigenvalue, whose associated eigenfunction $\psi$ is everywhere positive and unique up to scale.  Thus $L\psi= \mu \psi$ and integrating this over $M$  implies $\mu=0$ by Stokes' theorem.  Therefore, we may take $\Gamma$ to be (proportional) to the principal eigenfunction which satisfies $L\Gamma=0$.  In terms of $K$ defined by (\ref{eq:Xdecomp}) this proves the claim.
\end{proof}

We are now ready to establish our main result which proves Theorem \ref{maintheo0} (i).
\begin{theorem} \label{th:main}
Let $(M, g)$  be a compact, oriented, Riemannian manifold admitting a non-gradient vector field $X$ satisfying  (\ref{eq:nhE}). Then, there exists a (unique up to scale) positive function $\Gamma$ such that $K$ defined by (\ref{eq:Xdecomp}) is Killing vector field of $(M, g)$. Furthermore, if $\lambda \leq 0$, then $[K, X]=0$.
\end{theorem}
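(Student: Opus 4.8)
The plan is to combine the two ingredients already in hand — the divergence identity of Proposition \ref{prop} and the existence result of Lemma \ref{lemma} — with an integration over the closed manifold. First I would invoke Lemma \ref{lemma} to fix $\Gamma>0$ (unique up to scale) so that $K$ defined by (\ref{eq:Xdecomp}) is divergence-free, $\nabla_a K^a=0$. Feeding this choice into the identity (\ref{eq:identity}) annihilates the entire second line as well as the $-\tfrac12 K_a\nabla_b K^b$ term in the divergence, leaving
\begin{equation}
\nabla_{(a}K_{b)}\nabla^a K^b=\nabla^a\!\left(K^b\nabla_{(a}K_{b)}-\tfrac12 K_a\Delta\Gamma-\lambda\Gamma K_a\right).
\end{equation}
The left-hand side equals $|\nabla_{(a}K_{b)}|^2\ge 0$, while the right-hand side is a pure divergence. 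Integrating over the compact boundaryless $M$ makes the right-hand side vanish, forcing $\nabla_{(a}K_{b)}=0$, i.e. $K$ is Killing. Non-triviality is immediate: if $K\equiv 0$ then $X_a=-\nabla_a\log\Gamma$ would be a gradient, contradicting the hypothesis.

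For the refinement when $\lambda\le 0$, I would exploit that $K$ is now Killing, so both the left-hand side above and the term $K^b\nabla_{(a}K_{b)}$ vanish pointwise, turning the identity into the pointwise statement $\nabla^a(\tfrac12 K_a\Delta\Gamma+\lambda\Gamma K_a)=0$. Using $\nabla^a K_a=0$ this collapses to
\begin{equation}
K^a\nabla_a\!\left(\tfrac12\Delta\Gamma+\lambda\Gamma\right)=0.
\end{equation}
Writing $\sigma:=\mathcal L_K\Gamma=K^a\nabla_a\Gamma$ and using that an isometry commutes with the Laplacian ($\mathcal L_K\Delta\Gamma=\Delta\sigma$), the last display becomes the linear elliptic equation
\begin{equation}
\tfrac12\Delta\sigma+\lambda\sigma=0.
\end{equation}

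The concluding step is an eigenvalue argument. Multiplying by $\sigma$ and integrating over $M$ gives $\tfrac12\int_M|\nabla\sigma|^2=\lambda\int_M\sigma^2$. For $\lambda<0$ both sides must vanish, so $\sigma\equiv 0$; for $\lambda=0$ we get $\sigma$ constant, and since $\int_M\mathcal L_K\Gamma=\int_M\nabla_a(\Gamma K^a)=0$ the constant is zero. Either way $\sigma\equiv 0$. Finally, differentiating $X_a=\Gamma^{-1}(K_a-\nabla_a\Gamma)$ along $K$ (using $\mathcal L_K K_a=0$ and $\mathcal L_K\nabla_a\Gamma=\nabla_a\sigma$) yields $\mathcal L_K X_a=-\Gamma^{-1}(\nabla_a\sigma+\sigma X_a)$, which vanishes once $\sigma\equiv 0$, so $[K,X]=0$.

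Granting Proposition \ref{prop} and Lemma \ref{lemma}, the main obstacle is the second part. The decisive observation I expect to need is that the residual content of the identity for a Killing $K$ is precisely the statement that $\sigma=\mathcal L_K\Gamma$ solves $\Delta\sigma=-2\lambda\sigma$, so that the sign of $\lambda$ governs whether $\sigma$ can be nonzero. This is exactly why the argument should deliver $[K,X]=0$ for $\lambda\le 0$ but stall for $\lambda>0$, where $\sigma$ could be a genuine Laplace eigenfunction; recovering the result there would require extra input, consistent with the open problem flagged in the introduction.
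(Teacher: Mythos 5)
Your proposal is correct and follows essentially the same route as the paper: fix $\Gamma$ by Lemma \ref{lemma} so $K$ is divergence-free, integrate the collapsed identity (\ref{eq:identity}) over $M$ to conclude $K$ is Killing, and then reduce the residual identity to $\Delta(\mathcal{L}_K\Gamma)+2\lambda\,\mathcal{L}_K\Gamma=0$, which the sign of $\lambda\le 0$ forces to have only the trivial solution. Your explicit final computation $\mathcal{L}_K X_a=-\Gamma^{-1}(\nabla_a\sigma+\sigma X_a)$ is a nice touch that the paper leaves implicit, but the argument is the same.
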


\begin{proof}
By Lemma \ref{lemma} there exists a function $\Gamma>0$ such that $\nabla_a K^a=0$ where $K_a := \Gamma X_a+ \nabla_a\Gamma$ (i.e. defined by \eqref{eq:Xdecomp}).   Now, identity (\ref{eq:identity}) simplifies drastically to
\begin{equation}
\nabla_{(a} K_{b)} \nabla^a K^b = \nabla^a \left( K^b \nabla_{(a} K_{b)} -\tfrac{1}{2}K_a\Delta \Gamma- \lambda \Gamma K_a \right)  \; .  \label{eq:id2}
\end{equation}
Integrating this over $M$ and using Stokes' theorem implies that $\nabla_{(a} K_{b)}=0$ everywhere on $M$.  If $K$ vanishes identically then $X_a=- \nabla_a(\log \Gamma)$ is a gradient. Thus, under the stated assumptions in Theorem \ref{maintheo0},  $K$ must not vanish identically and hence defines a Killing vector field of $(M, g)$.

The identity (\ref{eq:id2}) now simplifies to
\begin{equation}
\label{eq:LKG}
\mathcal{L}_K\left( \tfrac{1}{2}\Delta \Gamma + \lambda \Gamma \right)=0  \qquad \implies \qquad \Delta (\mathcal{L}_K\Gamma) + 2\lambda \mathcal{L}_K \Gamma =0 
\end{equation}
where the second equation follows since $K$ is a Killing field.  For $\lambda=0$ this implies $ \mathcal{L}_K \Gamma $ is constant and since $ \mathcal{L}_K \Gamma =\nabla^a(K_a \Gamma)$ the constant must vanish. For $\lambda<0$ multiply the second equation in (\ref{eq:LKG}) by $\mathcal{L}_K \Gamma$ and integration by parts shows that $\mathcal{L}_K \Gamma=0$.  Hence $[K, X]=0$.
\end{proof}

The final part of the above proof, which shows that $X$ is invariant under $K$, does not work if $\lambda>0$.   However, it shows that even if $\lambda>0$ there is a Killing field $K$ of the form (\ref{eq:Xdecomp}). Using this, in dimension $n=2$ we can obtain the following result, which establishes Theorem \ref{maintheo0} (ii).

\begin{theorem}
\label{prop2d}
 Let $(M, g)$ be a $2$-dimensional compact, oriented, Riemannian
manifold  with a non-gradient vector field $X$ such that the horizon equation  (\ref{eq:nhE}) holds for some $\lambda$ (positive, zero, or negative).  Then $[K, X]=0$ where $K$ is the Killing field in Theorem \ref{th:main}.
 \end{theorem}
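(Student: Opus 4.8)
The plan is to exploit the Killing field $K$ produced by Theorem~\ref{th:main} to reduce (\ref{eq:nhE}) to a cohomogeneity-one system and then eliminate the non-invariant part of $\Gamma$ directly. Recall from the proof of Theorem~\ref{th:main} that $\psi:=\mathcal{L}_K\Gamma$ obeys $\Delta\psi+2\lambda\psi=0$ and that the desired conclusion $[K,X]=0$ is equivalent to $\psi\equiv0$; since $\lambda\le0$ is already settled there, I would concentrate on $\lambda>0$. In that case the trace of (\ref{eq:nhE}) integrated over $M$ gives $\int_M K_G>0$, so by Gauss--Bonnet $M=S^2$ and the (nontrivial) field $K$ generates a circle action with exactly two fixed points. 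Away from these I would introduce adapted coordinates $(\sigma,\phi)$ with $K=\partial_\phi$; using that the stream function $\sigma$ satisfies $|\nabla\sigma|^2=|K|^2$ and $\mathcal{L}_K\sigma=0$, the metric takes the diagonal form $g=Q^{-1}d\sigma^2+Q\,d\phi^2$ with $Q=Q(\sigma)=|K|^2$ and Gaussian curvature $-\tfrac12 Q''$.

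The key step is the off-diagonal component of the horizon equation. Because $R_{ab}$ is pure trace in two dimensions, the $(\sigma\phi)$-component of (\ref{eq:nhE}) reduces to $\tfrac12 X_\sigma X_\phi=\nabla_{(\sigma}X_{\phi)}$. Substituting $X^\flat=(K^\flat-d\Gamma)/\Gamma$ and simplifying, I expect a large number of cancellations to collapse this to the first-order relation $2\,\partial_\sigma\psi=\psi\,\partial_\sigma\log(Q\Gamma)$, equivalently $\partial_\sigma\big(\psi^2/(Q\Gamma)\big)=0$. Thus $\psi^2/(Q\Gamma)$ is a function of $\phi$ alone; writing $\psi=\partial_\phi\Gamma=2\sqrt{\Gamma}\,\partial_\phi\sqrt{\Gamma}$ and integrating in $\phi$ (with $\sqrt{Q}$ independent of $\phi$) yields the separated form $\sqrt{\Gamma}=\sqrt{Q}\,F(\phi)+C(\sigma)$. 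Regularity at the two fixed points of $K$, where $Q\to0$ and $\sqrt{Q}$ plays the role of a local radial coordinate (so only the first harmonics $\sqrt{Q}\cos\phi,\sqrt{Q}\sin\phi$ extend smoothly), then forces $F$ to be a single first harmonic, which after a rotation of $\phi$ I may take to be $F=a\cos\phi$.

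To finish I would feed this ansatz back into the remaining content of (\ref{eq:nhE}). The $K$-invariant quantity of (\ref{eq:LKG}) gives $\Delta\Gamma+2\lambda\Gamma=2G(\sigma)$; inserting $\sqrt{\Gamma}=\sqrt{Q}\,F+C$ and matching the $\phi$-Fourier modes produces a system of ordinary relations. Assuming $a\ne0$, the $\cos2\phi$ mode forces $(Q')^2+QQ''+2\lambda Q\equiv4$, whose only solution regular at both poles is the round metric of constant curvature $\lambda/3$ (in particular $\lambda>0$). On that round sphere the $\cos\phi$ mode becomes an associated-Legendre equation for $C(\sigma)$ whose sole smooth solution is $C\propto\sigma$, with no constant term surviving. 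Hence $\sqrt{\Gamma}$ reduces to a nonzero linear function on the round $S^2$, which necessarily vanishes along a great circle, contradicting $\Gamma>0$. Therefore $a=0$, $F$ is constant, $\psi=\partial_\phi\Gamma\equiv0$, and $[K,X]=0$.

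The main obstacle I anticipate is twofold. First, the dramatic collapse of the off-diagonal equation to $2\,\partial_\sigma\psi=\psi\,\partial_\sigma\log(Q\Gamma)$ is a delicate computation on which the whole separated structure rests. Second, and more conceptually, the curvature condition $(Q')^2+QQ''+2\lambda Q\equiv4$ is \emph{not} in itself contradictory—it holds for the round sphere—so the argument cannot be closed at the level of the metric alone; one is driven into the degenerate round case and must extract the contradiction from the positivity of the principal eigenfunction $\Gamma$. Ensuring that no constant mode in $C$ persists (so that $\sqrt{\Gamma}$ is genuinely sign-changing) is the subtle point, and it is precisely here that the second-order, not merely first-order, information in the horizon equation becomes indispensable.
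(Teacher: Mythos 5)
Your proof is correct, but it takes a genuinely different route from the paper's. The paper stays local and tensorial: multiplying (\ref{eq:nhE2}) by $\Gamma$ and applying $\epsilon^{ac}\nabla_c$ produces the identity $\left(R-\tfrac{2\lambda}{3}\right)\nabla_d\Gamma+\tfrac{2\Gamma}{3}\nabla_d R=2K^b\nabla_{[d}X_{b]}$, whose contraction with $K^d$ gives the dichotomy $\left(R-\tfrac{2\lambda}{3}\right)\mathcal{L}_K\Gamma=0$; the constant-curvature branch is then eliminated by a prolongation argument showing $\td \xb=0$, i.e.\ $X$ would be a gradient, contrary to hypothesis. You instead reduce to $\lambda>0$ (legitimate, since Theorem \ref{th:main} settles $\lambda\le 0$), get $M=S^2$ from Gauss--Bonnet, and do Fourier analysis along the circle orbits of $K$. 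Your two pivotal computations are right: the $(\sigma\phi)$ component of the horizon equation does collapse as claimed (the quadratic term $\tfrac12 X_\sigma X_\phi$ cancels against part of $\nabla_{(\sigma}X_{\phi)}$, leaving exactly $-2\partial_\sigma\partial_\phi\Gamma+\partial_\sigma\Gamma\,\partial_\phi\Gamma/\Gamma+Q'\partial_\phi\Gamma/Q=0$, which is your relation $2\partial_\sigma\psi=\psi\,\partial_\sigma\log(Q\Gamma)$), and the $\cos 2\phi$-mode condition $(Q')^2+QQ''+2\lambda Q=4$, combined with regularity $Q\to 0$, $Q'\to\pm 2$ at a pole, forces the first integral $4Q^2(Q')^2=16Q^2-\tfrac{16\lambda}{3}Q^3+E$ to have $E=0$, hence $Q''=-2\lambda/3$ and the round metric of curvature $\lambda/3$. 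Notably, both proofs run into the same degenerate branch $R=2\lambda/3$, which genuinely supports singular solutions---see Remark \ref{constR}, where $\xb=-2\,\td F/F$ with $F$ linear, so formally $\sqrt{\Gamma}\propto |F|$, precisely the sign-changing linear function your argument produces---but they exclude it by different mechanisms: the paper via non-gradientness of $X$, you via positivity of the principal eigenfunction $\Gamma$, which your spherical-harmonic analysis (eigenvalue $2\lambda$ selects $\ell=2$, whence $C\propto\sigma$ with no constant term) shows is impossible. The paper's route buys brevity, uniformity in $\lambda$, and independence from topological input; yours buys explicit structure (the conservation law $\partial_\sigma\bigl(\psi^2/(Q\Gamma)\bigr)=0$ and the separated form $\sqrt{\Gamma}=\sqrt{Q}\,F(\phi)+C(\sigma)$) together with a concrete picture of why the degenerate case is borderline. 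The details you flag or gloss over---the consistency of the sign of $\psi$ when integrating to obtain $F$, smoothness of $\sqrt{Q}\,F(\phi)$ at the fixed points forcing first harmonics, and uniqueness of the regular solution of the $Q$-ODE---all check out, so they are presentational rather than substantive gaps.
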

 
 \begin{proof}
 By the first part of the proof of Theorem \ref{th:main}, if $X$ is non-gradient, there exists a function $\Gamma>0$ and Killing vector field $K$ defined by (\ref{eq:Xdecomp}).  With this choice, multiply equation (\ref{eq:nhE2}) by $\Gamma$ and apply $\epsilon^{ac}\nabla_c$ where $\epsilon_{ab}$ is the volume form of $(M, g)$. Then, commute derivatives of the $\nabla^3\Gamma$ term using $[\nabla_a, \nabla_b]V_c=- R^d_{~cab} V_d$ and use again (\ref{eq:nhE2}) to eliminate the 2nd derivatives of $\Gamma$.  Next, multiply by $\epsilon^{b}_{~d}$ and simplify using $\epsilon^{bd}\epsilon^{ac}= g^{ba}g^{dc}- g^{bc}g^{da}$ and the fact $K$ is a Killing  field. This gives the identity
 \begin{equation}
 \left( R- \frac{2\lambda}{3} \right) \nabla_d \Gamma + \frac{2\Gamma}{3} \nabla_d R =2 K^b\nabla_{[d}X_{b]} \;.\label{eq:id2d}   
 \end{equation}
 More details of the derivation of (\ref{eq:id2d})  are given in Appendix \ref{app:id2d}.
 Now contracting  (\ref{eq:id2d})  with $K^d$ gives  
  \begin{equation}
  \left( R- \frac{2\lambda}{3} \right) \mathcal{L}_K\Gamma=0  \;. 
 \end{equation} 
Therefore, either $R=2\lambda/3$ identically and the space is constant curvature, or $\mathcal{L}_K\Gamma=0$.  In the latter case  it follows that $[K,X]=0$ as required.

To investigate the case $R=2\lambda/3$, prolong equation (\ref{eq:nhE}) for $n=2$ to obtain
\begin{equation}
\nabla_a X_b = \frac{1}{2} X_ a X_b + \left( \lambda - \tfrac{1}{2} R \right) g_{ab} + \frac{1}{2}\Omega \epsilon_{ab}   \label{eq:prolong}
\end{equation}
where $\Omega$ is the function defined by $\Omega=\star \td X^\flat$ and $\star$ is the Hodge star operator.  Now by applying $\epsilon^{ca}\nabla_c$ to \eqref{eq:prolong} and commuting derivatives one can show
\begin{equation}
\left(\td - \frac{3}{2} X^\flat \right)\Omega - \star \left(\td - \frac{3}{2} X^\flat \right) \left( R- \frac{2\lambda}{3} \right)=0  \; .
\end{equation}
This expression was also found in~\cite{Dobkowski-Rylko:2018nti}.
Therefore, if $R=2\lambda/3$ we deduce  that $(\td - \tfrac{3}{2}X^\flat)\Omega=0$. Taking the exterior derivative of this expression and substituting back in gives $\Omega\wedge \star \Omega=0$ and hence $\td X^\flat=0$.  But if $\td X^\flat=0$ then $X$ must vanish identically as shown in~\cite{Chrusciel:2005pa, Bahuaud:2022iao}.
 \end{proof}

\begin{remark}
\label{constR}
For completeness we shall present the general solution in the  constant curvature case encountered in the above proof where 
$R=2\lambda/3$ and $X$ is a gradient. If $M=S^2$  then necessarily $\lambda>0$ by the Gauss--Bonnet  
theorem. We can chose the spherical  coordinates $(\theta, \phi)$ on $M$ such that
\be
\label{non_inh1}
g=\frac{3}{\lambda}\Big(\td \theta^2+
\sin^2{\theta}\td \phi^2\Big), \quad
\xb=-2\frac{\td F}{F}
\ee
for some function $F=F(\theta, \phi)$. The $(\theta\theta)$ component of the horizon equation 
(\ref{eq:HEE}) gives $F=f_1(\phi)\sin{\theta}+f_2(\phi)\cos(\theta)$, where $f_1, f_2$ are functions of $\phi$.
The $(\theta\phi)$ component then gives $f_2=b=\mbox{const}$. Finally the $(\phi\phi)$ component gives
$f_1=a\sin{\phi}$ where one constant of integration has been absorbed into $\phi$, and $a$ is another constant. Therefore, the final solution is (\ref{non_inh1}) with
\be
\label{form_of_F}
F=a\sin{\phi}\sin{\theta}+b\cos{\theta}.
\ee
The resulting $X^{\flat}$ is singular on $S^2$ at $F=0$ which occurs for any $a,b$ (as must be the case since smooth gradient solutions on compact $M$ are trivial ~\cite{Chrusciel:2005pa, Bahuaud:2022iao}).

\vskip2pt

If the scalar curvature of $g$ is non-constant then the choice of the Killing vector $K$ is unique, up to a constant multiple of $K$. In the constant curvature case above the metric $g$ is $SO(3)$--invariant, and although $K=\p_\phi$ does not preserve $X$ unless
$a=0$, there is a choice of 
$U(1)\subset SO(3)$ generated by the Killing vector
\[
K=a(\cos{\phi}\p_\theta-\sin{\phi}\cot{\theta}\p_\phi)+b\p_\phi
\]
which keeps $X$ invariant.

The following example shows another occurrence of
this: Consider
a pair $(g=\td x^2+\td y^2, \xb=-2x^{-1}\td x)$ which solves (\ref{eq:HEE}) on $M=\mathbb{R}^2$ (this in fact is a special case of the static horizons~\cite{Chrusciel:2005pa}). The isometry group of $g$ is three--dimensional, and while $X=-2x^{-1}\partial_x$ commutes with $\p_y$, it does not commute with the other two Killing vectors $\p_x$ and $x\p_y-y\p_x$.  
\end{remark}

\begin{remark}
In fact, although not needed in the proof, $[K, X]=0$ implies $\mathcal{L}_K\Gamma=0$, if $K$ is a Killing field.   To see this, suppose $[K, X]=0$.  Then Lie deriving $L\Gamma=0$ along $K$, where $L$ is the operator in Lemma \ref{lemma}, we deduce that $L (\mathcal{L}_K \Gamma)=0$.  Hence $\mathcal{L}_K \Gamma$ must be a constant multiple of the principal eigenfunction $\Gamma$ of $L$ and integrating over $M$ implies this constant vanishes.

If one assumes $X$ is non-gradient (as  in Theorem \ref{maintheo0}) and $K$ is of the form (\ref{eq:Xdecomp}) this result follows from a local argument. If $[K,X]=0$ then Lie-deriving $K_a=\Gamma X_a +(\td \Gamma)_a$ along $K$ gives   $0=K(\Gamma) X_a + (\td (K(\Gamma)))_a$ and therefore $X$ is a gradient-field unless $K(\Gamma)=0$ identically.
\end{remark}

\subsection{Uniqueness of extremal Kerr horizon}

The most important consequence of Theorem \ref{maintheo0} is that it allows us to complete the classification of solutions to \eqref{eq:HEE} on $M=S^2$. \\
\begin{myproof}{Theorem}{\ref{cor1}}
If $X$ is non-gradient, Theorem \ref{maintheo0} shows that any solution on $M=S^2$ must admit a Killing vector field $K$ that leaves $X$ invariant.  All smooth solutions of this kind were determined in~\cite{Lewandowski:2002ua, Kunduri:2008rs}, with the result that the general solution is isometric to the extremal Kerr horizon, including the case of a cosmological constant.

If $X$ is gradient all solutions on $S^2$ must be trivial~\cite{Chrusciel:2005pa} and hence Einstein with $\lambda>0$. This corresponds to the horizon of an extremal Schwarzschild de Sitter spacetime which is a special case of the extremal Kerr de Sitter horizon.
\end{myproof}

\begin{remark}
In Section \ref{section4} we give an alternate derivation of the general axially symmetric solution on $M=S^2$ using K\"ahler geometry.
\end{remark}

\subsection{Near-horizon symmetry enhancement}
\label{section2}

We can extract a first integral of the system (\ref{eq:nhE}). Define the function
\begin{equation}
F := \frac{1}{2} | X|^2- \frac{1}{2} \nabla_a X^a+\lambda  \; .  \label{eq:Fdef}
\end{equation}
It has been shown that the contracted Bianchi identity for (\ref{eq:nhE}) is equivalent to\cite{Kunduri:2013gce, Kunduri:2008rs}
\begin{equation}
\nabla_a F- X_a F - 2 X^b\nabla_{[a} X_{b]}+ \nabla^b \nabla_{[a}X_{b]}=0  \; .   \label{eq:Bid}
\end{equation}
Now define another function $A$ (again inspired by the near-horizon geometry of Kerr) by
\begin{equation}
F=: \frac{A}{\Gamma} +\frac{|K|^2}{\Gamma^2}  \; .    \label{eq:Adef}
\end{equation}
Then, a tedious calculation, using (\ref{eq:Xdecomp}) and (\ref{eq:Adef}), shows that (\ref{eq:Bid}) can be written as
\begin{equation} 
\nabla_a A+\frac{1}{2\Gamma} \left(\nabla_a -  \frac{K_a +\nabla_a\Gamma}{\Gamma} \right) (K^b \nabla_b \Gamma) -  \left(\nabla^b - \frac{K^b}{\Gamma} \right)\nabla_{(a} K_{b)} + \left(\nabla_a+\frac{K_a-\nabla_a\Gamma}{2\Gamma} \right) \nabla_c K^c=0.   \label{eq:DA}
\end{equation}
Therefore, we obtain the following.
\begin{prop} \label{prop:A}
Under the assumptions of Theorem \ref{maintheo0} and $\lambda\leq 0$ the function $A$ defined by (\ref{eq:Adef}) is a negative constant. 
\end{prop}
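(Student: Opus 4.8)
The plan is to feed the conclusions of Theorem~\ref{th:main} into the identity (\ref{eq:DA}). Under the hypotheses of that theorem (with $\lambda\le 0$ and $X$ non-gradient) we are handed a positive function $\Gamma$ for which $K_a=\Gamma X_a+\nabla_a\Gamma$ is a Killing field, so that $\nabla_{(a}K_{b)}=0$; by construction (Lemma~\ref{lemma}) it is divergence-free, $\nabla_a K^a=0$; and since $\lambda\le 0$ we additionally have $[K,X]=0$, equivalently $\mathcal{L}_K\Gamma=K^a\nabla_a\Gamma=0$. First I would substitute these three facts into (\ref{eq:DA}) term by term.

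The second step is the observation that, once these are imposed, every term of (\ref{eq:DA}) other than $\nabla_a A$ is annihilated: the term containing $\nabla_{(a}K_{b)}$ vanishes by the Killing equation, the term containing $\nabla_c K^c$ vanishes by Lemma~\ref{lemma}, and the middle term $\tfrac{1}{2\Gamma}\big(\nabla_a-\Gamma^{-1}(K_a+\nabla_a\Gamma)\big)(K^b\nabla_b\Gamma)$ vanishes because both $K^b\nabla_b\Gamma=\mathcal{L}_K\Gamma$ and its gradient are identically zero. Thus (\ref{eq:DA}) collapses to $\nabla_a A=0$, and since $M$ is connected and compact this shows that $A$ is constant. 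This is the promised first integral.

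It remains to pin down the sign. Combining the definition (\ref{eq:Adef}) with (\ref{eq:Fdef}), and using $\nabla_a K^a=0$ together with $K^a\nabla_a\Gamma=0$ to rewrite $|X|^2=(|K|^2+|\nabla\Gamma|^2)/\Gamma^2$ and $\nabla_a X^a=-\Delta\Gamma/\Gamma+|\nabla\Gamma|^2/\Gamma^2$, I expect the clean pointwise formula $A=\tfrac{1}{2}\Delta\Gamma-\tfrac{1}{2}|K|^2/\Gamma+\lambda\Gamma$ to drop out. Integrating over the closed manifold, the $\Delta\Gamma$ term is killed by the divergence theorem, leaving $A\,\mathrm{Vol}(M)=-\tfrac{1}{2}\int_M |K|^2/\Gamma+\lambda\int_M\Gamma$. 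As $\Gamma>0$ and $\lambda\le 0$, both contributions are non-positive, so $A\le 0$ immediately.

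The substantive work lies upstream, in the derivation of (\ref{eq:DA}) and in Theorem~\ref{th:main}; here the one genuinely delicate point — and the main obstacle I anticipate — is upgrading $A\le 0$ to strict negativity precisely in the borderline case $\lambda=0$. When $\lambda<0$ the term $\lambda\int_M\Gamma$ is already strictly negative. When $\lambda=0$ only $-\tfrac{1}{2}\int_M |K|^2/\Gamma$ survives, and this is strictly negative unless $K\equiv 0$; but $K\equiv 0$ would force $X_a=-\nabla_a(\log\Gamma)$ to be a gradient, contradicting the standing non-gradient hypothesis. Hence $A<0$ in all cases, which completes the proof.
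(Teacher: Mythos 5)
Your proposal is correct and follows essentially the same route as the paper: substituting the Killing equation, $\nabla_a K^a=0$ and $\mathcal{L}_K\Gamma=0$ into (\ref{eq:DA}) to get constancy of $A$, deriving the pointwise formula $A=-\tfrac{1}{2\Gamma}|K|^2+\tfrac{1}{2}\Delta\Gamma+\lambda\Gamma$, and integrating over $M$ to fix the sign, with the borderline case $\lambda=0$, $K\equiv 0$ excluded because it would force $X$ to be a gradient. The only difference is that you spell out the intermediate computations that the paper compresses into ``immediately follows'' and ``one finds''.
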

\begin{proof}
Constancy of $A$ immediately follows from (\ref{eq:DA}) together with Theorem \ref{maintheo0}. Then, using the definitions of $F$ and $A$ one finds
\begin{equation}
A= -\frac{1}{2\Gamma} |K |^2 + \frac{1}{2}\Delta \Gamma + \lambda \Gamma  \; .
\end{equation}
Integrating over $M$ for $\lambda \leq 0$  shows that $A\leq 0$, and furthermore, $A=0$ if and only if $\lambda=0$, $K_a=0$  and $\Gamma$ constant (which gives trivial $X=0$).
\end{proof}

The significance of this is revealed by the following. If $(M, g)$ is an $n$-dimensional Riemannian manifold with a vector field $X\in \mathfrak{X}(M)$ and a smooth function $F$,  the near-horizon geometry is the associated $(n+2)$-dimensional spacetime $(\mathbb{R}^2\times M, \mathbf{g})$,
\begin{equation}
\mathbf{g}= r^2 F \td v^2+ 2 \td v \td r + 2 r X^\flat \td v+ g \; ,   \label{eq:nh}
\end{equation}
where the coordinates $(v, r)\in \mathbb{R}^2$. This satisfies the vacuum Einstein equations $\text{Ric}(\mathbf{g})= \lambda \mathbf{g}$ if and only if $(g, X, F)$ satisfy (\ref{eq:nhE}) and (\ref{eq:Fdef}), see e.g.~\cite{Kunduri:2013gce}. The surface $r=0$ is an extremal Killing horizon of the Killing vector $\partial_v$. A generic near-horizon geometry has a 2-dimensional non-abelian isometry group generated by $v$-translations and $(v, r)\to (\lambda^{-1} v, \lambda r)$ dilations.   We will now show that the above results imply that the near-horizon geometry has enhanced symmetry. To this end, define a new coordinate $\rho \in \mathbb{R}$ by $r  = \Gamma \rho$ in terms of which the near-horizon geometry becomes
\begin{equation}
\mathbf{g}=\Gamma [ A \rho^2 \td v^2+ 2 \td v \td \rho ]+  g+ 2\rho K^\flat \td v + |K|^2 \rho^2 \td v^2 \; ,  \label{eq:nh2}
\end{equation}
where $A$ and $K$ are defined by (\ref{eq:Xdecomp}) and (\ref{eq:Adef}).  We are ready to deduce the following. \\

\begin{myproof}{Theorem}{\ref{cor2}}  If $X$ is a gradient then it has been shown in all dimensions that if $\lambda\leq 0$ then $X$ vanishes identically~\cite{Chrusciel:2005pa, Bahuaud:2022iao}.  If $X$ is non-gradient Theorem \ref{maintheo0} applies and hence Proposition \ref{prop:A} shows that $A_0:=A<0$ is a constant, so $\gamma:=A_0 \rho^2 \td v^2+ 2 \td v \td \rho$, the 2-dimensional metric in the square brackets of (\ref{eq:nh2}), defines a constant negative curvature Lorentzian space $(\mathbb{R}^2, \gamma)$ that is isometric to an open set of  AdS$_2$ in Poincar\'e coordinates (adapted to a Killing horizon at $\rho=0$). The volume form of this space is $\text{vol}_\gamma= \td \theta$ where $\theta= \rho \td v$. Let $Y$ be a Killing vector field of $\gamma$. Then $0=\mathcal{L}_Y \td\theta=\td\mathcal{L}_Y \theta$ and hence $\mathcal{L}_Y\theta= \td f_Y$ for some function $f_Y$ on $\mathbb{R}^2$. The Killing vector fields of $\gamma$ in the $(\rho, v)$ coordinates are
\be
Y_1=\tfrac{1}{2} v^2\partial_v +(A_0^{-1}- \rho v) \partial_\rho, \qquad Y_2= v \partial_v- \rho\partial_\rho, \qquad Y_3=  \partial_v  \; .
\ee
Then, setting $f_\alpha:= f_{Y_{\alpha}}, \alpha=1, 2, 3$, we find the explicit expressions
\be
\label{eq:falpha}
f_1=A_0^{-1}v+c_1, \quad f_2=c_2, \quad f_3=c_3,
\ee
where $c_\alpha$ are constants.  These Killing fields form an $\mathfrak{sl}(2)$  Lie algebra. We now show how to extend these to Killing vector fields of the near-horizon geometry $(\mathbb{R}^2\times M, \mathbf{g})$. 

Clearly, $K$ extends to a Killing vector field of the near-horizon geometry \eqref{eq:nh} tangent to $M$, and the AdS$_2$ Killing fields $Y_\alpha$ extend to vector fields on $\mathbb{R}^2\times M$ tangent to $\mathbb{R}^2$ that Lie-derive all the data $(g, X, F)$ on $M$.
Let $h$ be the metric on the space of orbits of $K$ in $(M, g)$ so that on the open set where $K\neq 0$ we have
\[
g=h+|K|^{-2} K^\flat \otimes K^\flat, 
\]
and we can write the near-horizon geometry \eqref{eq:nh2} as
\be
\label{eq:nhalt}
{\bf{g}}=\Gamma \gamma+h+ |K|^{-2}(K^\flat+|K|^2\theta)^2   \; .
\ee
Away of the zeros of $K$ define a function $\phi:M \to \mathbb{R}$ such that $K(\phi)=1$. Then we can write $K^\flat= |K|^2 (\td \phi +\Theta)$ for some 1-form $\Theta$ on the space of orbits of $K$ (in particular, $\iota_K\Theta=0$ and $\mathcal{L}_K \Theta=0$) and hence
\be
\label{eq:nhalt2}
{\bf{g}}=\Gamma \gamma+h+ |K|^2\omega^2   \; ,
\ee
where $\omega:= \td \phi+ \Theta+\theta$. Therefore, we can regard \eqref{eq:nhalt2} as a metric on a non-trivial line  (or circle) bundle over $\mathbb{R}^2$ times the space of orbits of $K$ in $M$, with connection $\omega$. Indeed, since $K(\phi)=1$ it follows that $\iota_K\omega=1$ so $K$ is vertical.  Now, we have $\mathcal{L}_{Y_\alpha} \omega= \mathcal{L}_{Y_\alpha} \theta= \td f_\alpha$ and  $\mathcal{L}_{f_\alpha K} \omega=\td f_\alpha$  and therefore we deduce that
\be
\overline{Y}_{\alpha}=Y_\alpha-f_\alpha K, \quad \alpha =1, 2, 3,  \label{eq:Ybar}
\ee
which satisfy $\mathcal{L}_{\bar{Y}_\alpha}\omega=0$,  are Killing vector fields of the metric (\ref{eq:nhalt2}). The metric (\ref{eq:nhalt2}) is valid on the region $K\neq 0$, however, at any zero of $K$ the vector fields $\bar{Y}_\alpha=Y_\alpha$ still leave the near-horizon geometry \eqref{eq:nh2} invariant. Therefore, \eqref{eq:Ybar} are Killing vector fields of the near-horizon geometry $(\mathbb{R}^2\times M, \mathbf{g})$.

Now setting the constants $c_1=0, c_2=A_0^{-1}, c_3=0$, it is easily verified that $[\bar{Y}_\alpha, \bar{Y}_\beta ] = c_{\alpha\beta\gamma} \bar{Y}_\gamma$, where the structure constants $c_{\alpha\beta\gamma}$ are equal to those of the Lie algebra of Killing fields of AdS$_2$, that is,  $[{Y}_\alpha, {Y}_\beta ] = c_{\alpha\beta\gamma}{Y}_\gamma$.  Hence the near-horizon geometry has an $\mathfrak{sl}(2) \times \mathfrak{u}(1)$ Lie algebra of Killing fields where the $\mathfrak{sl}(2)$ is generated by $\bar{Y}_\alpha$ and the $\mathfrak{u}(1)$ by $K$.
\end{myproof}
\begin{remark}
As mentioned above, $(\mathbb{R}^2, \gamma)$ is isometric to an open set of AdS$_2$ and hence it is extendible as a spacetime to global AdS$_2$.  The Killing field $Y_1$ is not complete on $(\mathbb{R}^2, \gamma)$ and hence does not generate a 1-parameter subgroup of isometries.  This is why the above result is formulated in terms of the Lie algebra of Killing vector fields as opposed to the isometry group. However, the near-horizon geometry (\ref{eq:nhalt2}) is also extendible to a complete spacetime by changing trivialisation of the $\mathbb{R}^2$ factor of the base so that it extends to global AdS$_2$. Then, on this extended near-horizon geometry spacetime the theorem can be strengthened to say that its isometry group has an $SO(2,1)$ subgroup with $3$-dimensional orbits; if $K$ has closed orbits this gives an $SO(2,1)\times U(1)$ isometry group.  In higher dimensions $K$ need not have closed orbits; for example, for the $5$-dimensional Myers-Perry black hole (so $M$ is $3$-dimensional), the isometry group of the (extended) near-horizon geometry is $SO(2,1)\times U(1)^2$ and $K$ is a Killing field that is tangent to a non-closed orbit of $U(1)^2$~\cite{Kunduri:2007vf}.
\end{remark}

\section{K\"ahler potential in 2-dimensions}
\label{section4}
In this section we will show that if $M$ is 2-dimensional the horizon Einstein equation (\ref{eq:HEE}) reduces to a fourth order PDE for the K\"ahler potential of the metric, and use this formulation together with Theorem \ref{maintheo0} to show that  on $M=S^2$ this PDE is explicitly solvable.

\subsection{4th order PDE}
The complex structure on $M=S^2$ is defined by an atlas consisting of two open sets $U$ and $\widetilde{U}$
with holomorphic coordinates $\zeta\in \C$ and $\eta\in\C$ respectively, and the overlap holomorphic patching relation $\eta=\zeta^{-1}$ on $U\cap\widetilde{U}$. In these coordinates the round (or constant scalar curvature) metric on $M$ takes the form
\be
\label{g00}
g_0=\frac{4\td \zeta \td \zb}{(1+|\zeta|^2)^2},
\ee
and by the uniformization theorem any other metric on $M$ can be written as
the form
\be
\label{guniform}
g=e^u g_0
\ee
where $u$ is a globally defined function. In what follows we shall use a K\"ahler potential for $(M, g)$, that is,
a real valued function $f$ such that
\be
\label{kahler11}
g=4f_{\zeta\bar{\zeta}} \; \td\zeta \td\bar{\zeta}
\ee
where $f_{\z}=\p_{\z} f$ etc.

While such $f$ can not be defined globally on $M$, the $\p\ov{\p}$ lemma implies that if $\omega_0$ and $\omega$ are K\"ahler
two--forms of $g_0$ and $g$ respectively, then there exists a global function $f_1: M\rightarrow \R$ such that
\[
\omega=\omega_0+i\p\overline\p f_1, \quad\mbox{where}\quad \p=\td \zeta\otimes\partial_{\zeta}.
\]
Therefore, if we chose a local K\"ahler potential $f_0$ for the round metric $g_0$, then  we can set
\[
f=f_0+f_1,
\]
where $f_1$ is global. Adding a holomorphic function  and its conjugate (or equivalently a harmonic function) to the K\"ahler potential does not change the metric. This freedom can be partially fixed by the following argument: On the round sphere the local potential $f_0=\mbox{ln}(1+|\zeta|^2)$ yields the constant curvature metric (\ref{g00}), but blows up 
as $\zeta\rightarrow\infty$. Another K\"ahler potential
$\hat{f}_0(\eta, \overline{\eta})=f_0-\mbox{ln}{|\zeta|^2}$
is well defined near $\eta=0$ but blows up
near $\zeta=0$. Consider the vector field $W=\zeta\partial_\zeta$ which on the overlap $U\cap\widetilde{U}$ is
equal to $-\eta\partial_{\eta}$. 
Rewriting the definition of $\hat{f}_0$ as
\be
\label{hatf}
\hat{f}_0(\eta, \overline{\eta})-\frac{1}{2}\mbox{ln}{|\eta|^2}  =f_0(\z, \zb) -\frac{1}{2}\mbox{ln}{|\zeta|^2}
\ee 
and Lie deriving both sides of (\ref{hatf}) along $W$ we find
$\zeta (f_0)_{\zeta}-1/2=-(\eta {(\hat{f}_0)}_\eta-1/2)$. 
Choosing the K\"aher potential for $(M, g)$ to be
\[
f= \begin{cases}
  f_0   -\frac{1}{2}\mbox{ln}{|\zeta|^2}  +f_1  & \text{on}\;U \\
  \hat{f}_0 -\frac{1}{2}\mbox{ln}{|\eta|^2} +f_1 & \text{on}\; \widetilde{U}
\end{cases}
\]
we find that it has two singularities, one in $U$ and one in $\widetilde{U}$, but
\be
\label{globality}
\zeta f_{\zeta}-1/2
\ee
is a global and smooth function on $S^2$. 

We will now establish a general local result for the equation (\ref{eq:HEE}) in 2-dimensions.
\begin{prop}
\label{prop1}  Let $(g, X)$ satisfy the system (\ref{eq:HEE}) on a 2-dimensional surface $M$.  
In any holomorphic chart $(\z, \zb)$ there exists a K\"ahler potential $f$ such that
\be
\label{gandV}
g=4f_{\zeta\bar{\zeta}} \; \td\zeta \td\bar{\zeta}, \quad 
\xb=-2 f_{\zeta\bar{\zeta}}
\left( \frac{\td\zeta}{f_{\bar{\zeta}}}+\frac{\td\bar{\zeta}}{f_{{\zeta}}} \right) ,
\ee
and the the system (\ref{eq:HEE}) reduces to a single 4th order PDE for the K\"ahler potential
$f$
\be
\label{pde}
\frac{\p Q}{\p \zb}+\frac{\p \overline{Q}}{\p \zeta}=0, \quad
\mbox{where}\quad Q=({f_\z})^2\frac{\p}{\p \z}\Big(\ln{\frac{f_{\zb} f_{\z\zb}}{f_{\z}}}\Big) + \frac{2\lambda}{3} f_{\z}^3.
\ee
This equation is equivalent to (\ref{vector_pde}).
\end{prop}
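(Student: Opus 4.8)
The plan is to work in a single holomorphic chart, write the metric via its Kähler potential as in (\ref{kahler11}), and extract the three independent real components of the tensor equation (\ref{eq:nhE}) in the complex coframe $\{\td\z, \td\zb\}$. Since the Ricci tensor is pure trace in two dimensions, the purely holomorphic components satisfy $R_{\z\z}=R_{\zb\zb}=0$, and $g_{\z\z}=g_{\zb\zb}=0$, so the $(\z\z)$ component of (\ref{eq:nhE}) collapses to $\nabla_\z X_\z=\tfrac12 X_\z^2$. With $g_{\z\zb}=2f_{\z\zb}$ the only non-vanishing Christoffel symbols are $\Gamma^\z_{\z\z}=\p_\z\ln f_{\z\zb}$ and its conjugate, so this is the Bernoulli equation $\p_\z X_\z-(\p_\z\ln f_{\z\zb})X_\z=\tfrac12 X_\z^2$. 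First I would linearise it by setting $u=1/X_\z$; the integrating factor is then $f_{\z\zb}$ and, using $f_{\z\zb}=\p_\z f_{\zb}$, one integrates to $X_\z=2f_{\z\zb}/(h(\zb)-f_{\zb})$ for an arbitrary antiholomorphic $h$. The residual freedom in the Kähler potential — adding a harmonic function $p(\z)+\overline{p(\z)}$ leaves $g$ unchanged but shifts $f_{\zb}$ by the arbitrary antiholomorphic function $\overline{p'}$ — can be used to set $h\equiv0$, and by reality the conjugate choice simultaneously fixes the $(\zb\zb)$ component. This yields $X_\z=-2f_{\z\zb}/f_{\zb}$, equivalently the clean raised form $X=-f_\z^{-1}\p_\z-f_{\zb}^{-1}\p_{\zb}$, which is precisely the expression (\ref{gandV}).

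With $X$ now expressed purely through $f$, the single remaining real equation is the $(\z\zb)$ component, and this is what should become the PDE (\ref{pde}). Here $R_{\z\zb}=\tfrac12 R g_{\z\zb}=-\p_\z\p_{\zb}\ln f_{\z\zb}$, the mixed Christoffels vanish so $\nabla_{(\z}X_{\zb)}=\tfrac12(\p_\z X_{\zb}+\p_{\zb}X_\z)$, and substituting $X_\z=-2f_{\z\zb}/f_{\zb}$, $X_{\zb}=-2f_{\z\zb}/f_\z$ turns the $(\z\zb)$ component into a fourth-order scalar equation for $f$. I would then reorganise this equation into a manifestly real total-derivative form: grouping terms, one recognises the combination $\ln(f_{\zb}f_{\z\zb}/f_\z)$, whose $\z$-derivative produces exactly $f_{\z\zb}/f_{\zb}+f_{\z\z\zb}/f_{\z\zb}-f_{\z\z}/f_\z$, so that after multiplying by the appropriate power of $f_\z$ and adding the cosmological piece $\tfrac{2\lambda}{3}f_\z^3$ the equation takes the claimed shape $\p_{\zb}Q+\p_\z\overline{Q}=0$ with $Q$ as in (\ref{pde}). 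The reverse implication is immediate: defining $g$ and $X$ from any $f$ by (\ref{gandV}) satisfies the holomorphic components identically by construction, and the mixed component holds precisely when $f$ solves (\ref{pde}); hence (\ref{eq:HEE}) and (\ref{pde}) are locally equivalent.

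Finally, to identify (\ref{pde}) with the real equation (\ref{vector_pde}) I would pass to Cartesian coordinates $\z=x+iy$ and use the dictionary $\Delta=4\p_\z\p_{\zb}$, $\Delta f=4f_{\z\zb}$, $|\nabla f|^2=4f_\z f_{\zb}$ and $\nabla f\cdot\nabla(\,\cdot\,)=2(f_\z\p_{\zb}+f_{\zb}\p_\z)(\,\cdot\,)$. Substituting these into the derivative combinations appearing in $\p_{\zb}Q+\p_\z\overline{Q}$ and collecting terms should reproduce the five summands of (\ref{vector_pde}); in particular the $\lambda(\Delta f)^2$ term comes directly from the $\tfrac{2\lambda}{3}f_\z^3$ piece of $Q$ after differentiation and adding its conjugate.

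The main obstacle I anticipate is the middle step: the raw $(\z\zb)$ component is an unwieldy fourth-order expression, and the content of the proposition lies in recognising that it is an exact total divergence and in guessing the correct potential $Q$ — in particular the combination $f_{\zb}f_{\z\zb}/f_\z$ inside the logarithm and the precise coefficient $\tfrac{2\lambda}{3}$. As with the tensor identity (\ref{eq:identity}), the numerical factors are delicate, so the careful bookkeeping of the many third- and fourth-derivative terms, rather than any conceptual difficulty, is where the work lies.
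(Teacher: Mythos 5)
Your proposal is correct and follows essentially the same route as the paper: solve the $(\z\z)$ (Riccati/Bernoulli) component to obtain $X_\z=-2f_{\z\zb}/f_{\zb}$ after absorbing the antiholomorphic integration function into the harmonic freedom of the K\"ahler potential, and then identify the remaining $(\z\zb)$ component, via $R_{\z\zb}=-\p_\z\p_{\zb}\ln f_{\z\zb}$, with the fourth-order PDE (\ref{pde}) and its Cartesian form (\ref{vector_pde}). Your explicit linearisation $u=1/X_\z$ merely spells out the integration the paper asserts, and your ``appropriate power of $f_\z$'' is exactly right, since $f_\z^2$ times the $(\z\zb)$ component equals $\p_{\zb}Q$.
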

\begin{proof} 
Choose holomorphic coordinates and an associated K\"ahler potential so $g$ is of the form (\ref{kahler11}). In these coordinates write
\[
\xb=A\td\zeta+\overline{A}\td\zb,
\]
where $A$ is a smooth complex valued function.
The $(\zeta\zeta)$ and  $(\zb\zb)$
components of (\ref{eq:HEE}) yield
\[
\p_\zeta A-\frac{1}{2}A^2-\frac{f_{\zeta\zeta\zb}}{f_{\zeta\zb}}A=0,
\]
and the complex conjugate of this equation. These two equations can be integrated to find $A$ in terms of $f$ and its derivatives to be
\be
\label{A0}
A=-2\frac{f_{\z\zb}}{\p_{\zb}f+\p_{\zb}\overline{C}},
\ee
where $C=C(\z)$ is an arbitrary holomorphic function. The freedom of adding a holomorphic function and its conjugate to $f$ can be used to set $C$ to zero.
This choice yields
the form of $\xb$ in (\ref{gandV}). 
Finally, using  the formula for the Ricci scalar
\begin{equation}
R=-(\Delta_0 f)^{-1} \Delta_0(\ln{(\Delta_0 f)}), \quad\mbox{where}\quad\Delta_0=4\frac{\p^2}{\p\zeta\p\bar{\zeta}}  \label{eq:flatlap}
\end{equation} 
is the flat Laplacian,
the  $(\zeta\zb)$
component of (\ref{eq:HEE}) gives the 4th order PDE (\ref{pde}) for $f$.

Expanding the derivatives and isolating the numerator in (\ref{pde}) we find that 
this equation is equivalent to  (\ref{vector_pde}) if the gradient and Laplacian
are taken with respect to the Cartesian coordinates 
$(\frac{1}{2}(\z+\zb), \frac{1}{2i}(\z-\zb))$.
\end{proof}

\begin{remark}
A priori it is not obvious that the choice of K\"ahler potential in Proposition \ref{prop1}  is consistent with the globality condition 
(\ref{globality}) on $M=S^2$.  The $g$--norm of $X$ in \eqref{gandV}  gives
\begin{equation}
|X|^2 =\frac{4 f_{\z{\zb}}}{| f_\z|^2}.  \label{eq:Xnorm}
\end{equation}
By the uniformization theorem  $e^u =f_{\z\zb}(1+|\z|^2)^2$ is global as it is the conformal factor in \eqref{guniform}. With that in mind multiply the numerator and denominator of $|X|^2$ by $|\zeta|^2$ which gives
\[
|X|^2=\frac{|\z|^2}{(1+|\z|^2)^{2}} \frac{e^u}{|\z f_\z |^2}.
\]
The first factor is global as $|\z|^2/(1+|\z|^2)^2=|\eta|^2/(1+|\eta|^2)^2$ if $\eta=1/\z$. Thus $\z f_{\z}$ extends to a smooth function everywhere on $S^2$, apart  from perhaps the zeros of $X$.
\end{remark}

Interestingly, the PDE (\ref{pde}) admits a holomorphic first integral since it is equivalent to
\be
\label{Qintegral}
Q_{\zb}=0
\ee
with $Q$ defined as in (\ref{pde}). The expression $Q_{\zb}$ is complex, but putting all terms in the common denominator gives an expression with a real numerator - this is precisely
equation (\ref{pde}) -  and a complex denominator of lower order.
This implies that $Q=S(\z)$ for some holomorphic function $S$. If $S$ is non--zero, let $w$ be a holomorphic
coordinate on $M$ such that $S^{-1/3}\p_{\z}=\p_{w}$ (this has an effect of setting $S=1$). The resulting PDE for $f=f(w)$ is now the 3rd order PDE
\[
\frac{(f_w)^2f_{ww\wb}}{f_{w\wb}}-f_{w}f_{ww}+\frac{(f_{w})^2f_{w\wb}}{f_{\wb}}
+\frac{2\lambda}{3}(f_w)^3
=1 ,
\]
and its complex conjugate.  If $S$ identically vanishes, one instead obtains the 3rd order PDE $Q=0$ and its complex conjugate.  It would be interesting to obtain a general local solution to (\ref{pde}).
\begin{remark}
\label{remark0}
If the function $S$ vanishes identically the resulting system of PDEs  can be solved exactly. The condition
$Q=0$ integrates to
\be
\label{rem01}
\mbox{ln}{\Big(\frac{f_{\zb}f_{\z\zb}}{f_{\z}}\Big)}+\frac{2\lambda}{3}f=\overline{R(\z)}
\ee
for some holomorphic function $R$. If $R\neq 0$, we let $w$ be a holomorphic coordinate such that
$e^{-R/2}\p_{\z}=\p_w$. Taking the real and imaginary parts of (\ref{rem01}) gives
\[
f_{\bar{w}}=\pm f_w, \quad f_{w\bar{w}}=e^{-2\lambda f/3}.
\]
Set $w=x+iy$ and (swapping $x$ and $y$ if necessary) choose the plus sign in the first equation above. 
Then $f=f(x)$ satisfies the Liouville ODE $f_{xx}=4e^{-2\lambda f/3}$, and the Ricci scalar of the corresponding metric $g$ is constant and equal to $2\lambda/3$. There are three cases to consider:
\begin{itemize}
\item If $\lambda=0$ then reabsorbing one constant of integration into $x$ we get $f=2x^2$, and
\[
g=4(\td x^2+\td y^2), \quad \xb=-2 x^{-1}\td x. 
\]
\item If $\lambda\neq 0$ then reabsorbing one constant of integration into $x$ gives the general solution
\[
f=\frac{3}{\lambda}\mbox{ln}{\Big(c\;e^{-\frac{2xc}{3}} +6\lambda\Big)}+\frac{c}{\lambda}x-\frac{18\mbox{ln}(c)+6\mbox{ln}{(2)} }{4\lambda},
\]
where $c$ is a positive constant of integration. The resulting constant curvature metric can be brought into a standard spherical or hyperbolic form.  If $\lambda>0$ set
\[
x=-\frac{3}{2c}\mbox{ln}{\Big(\frac{6\lambda(1+\cos{\theta})}{c(1-\cos{\theta})}\Big)},\quad
y=\frac{3\phi}{c}
\]
which yields
\[
g=\frac{3}{\lambda}\Big(\td \theta^2+\sin^2{\theta} \td \phi^2\Big), \quad \xb=2\mbox{tan}{(\theta)}\td \theta.
\]
If $\lambda<0$ set
\[
x=-\frac{3}{2c}\mbox{ln}{\Big( \frac{6\lambda(1+\cosh{\theta})}{c(1-\cosh{\theta})}\Big)},\quad
y=\frac{3\phi}{c}
\]
which yields
\[
g=-\frac{3}{\lambda}\Big(\td \theta^2+\sinh^2{\theta} \td \phi^2\Big), \quad \xb=-2\mbox{tanh}{(\theta)}\td\theta.
\]
\end{itemize}
Observe that for all these solutions $X$ is a gradient field, and give further examples of solutions where $X$ is not invariant under all the Killing fields of $g$. The $\lambda>0$ solution gives the round metric on $S^2$, although $X$ is singular (this is the $a=0$ solution we encountered earlier in Remark \ref{constR}).
\end{remark}

\begin{remark}
For the Kerr horizon (\ref{kerr_exact}) the first integral $Q$ has a geometric interpretation which is related to the axial isometry. To see it
set $\Omega=\star \td\xb$, and consider a complex valued function
\[
T=\Big(\frac{R+i\Omega}{2}\Big)^{-1/3}.
\]
We then find 
$(f_{\z})^3 T^{-3}=Q$, or 
$T=\frac{\p f}{\p w}
$
and one can verify that
\[
K= \Big(\frac{2i}{a^2}\Big)^{1/3} \star \td T.
\]
The analogous relation also holds 
for isolated non--extremal horizons \cite{LP}.
\end{remark}

\subsection{Axial symmetry}

We shall now show that if $g$ is assumed to be axisymmetric  
on $M=S^2$ the 4th order ODE resulting from (\ref{pde}) is solvable by quadratures.   We emphasise that Theorem \ref{maintheo0} guarantees the existence of a Killing  field $K$ on $M=S^2$ that  preserves $X$ (for non-gradient $X$).

If $g$ is invariant under an isometric
$U(1)$ action, then we can adapt the coordinates to this action so that
\[
g=\frac{e^u}{(1+r^2)^2}(\td r^2+r^2\td \phi^2), \quad\mbox{where}\quad u=u(r) ,
\]
where $K=\partial_\phi$ is the Killing field with $2\pi$-periodic orbits.
Now choosing the holomorphic coordinate $\zeta=re^{i\phi}$ (so that the $U(1)$-action is $\zeta\rightarrow e^{i\psi}\zeta$), we see that the metric takes the form (\ref{kahler11}) and the K\"ahler potential satisfies
\be
\label{Kandf}
K(f_{\zeta\zb})=0, \quad\mbox{where}\quad
K=i(\zeta\p_\zeta-\zb\p_{\zb}).
\ee
The next result establishes a relation between the Killing vector $K$ of Theorem \ref{maintheo0} and the K\"ahler potential in Proposition \ref{prop1}.

\begin{prop}\label{prop:inherit}
If $(g, X)$ satisfy \eqref{eq:HEE} on $M=S^2$,
and $K$ is a Killing vector for $g$ as in Theorem \ref{maintheo0}, then $K(f)$ is a constant where $f$ is a K\"ahler potential as in Proposition \ref{prop1}.
\end{prop}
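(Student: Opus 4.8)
The plan is to show that $K(f)$ is harmonic and globally smooth on $S^2$, so that the maximum principle on a compact manifold without boundary forces it to be constant. Note that the horizon equation itself enters only through Theorem \ref{maintheo0} (which guarantees the existence of the Killing field $K$ and, via the preceding discussion, the globality of the chosen potential $f$); the argument is otherwise purely about a Killing field and a K\"ahler potential.

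First I would use that in two dimensions a Killing field is automatically holomorphic: since $K$ preserves both $g$ and the orientation, its flow preserves the complex structure, so $\mathcal{L}_K J = 0$. Hence I can take the holomorphic coordinate $\zeta$ to be adapted to $K$ exactly as in the axisymmetric discussion, making the flow of $K$ the rotation $\zeta \mapsto e^{i\psi}\zeta$, i.e. $K = i(\zeta\partial_\zeta - \bar\zeta\partial_{\bar\zeta})$, with the two fixed points at $\zeta = 0, \infty$. With this choice the invariance of the metric is precisely the statement $K(f_{\zeta\bar\zeta}) = 0$ recorded in (\ref{Kandf}).

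Next, writing $h := K(f) = i(\zeta f_\zeta - \bar\zeta f_{\bar\zeta})$, which is real since $f$ is real, a short computation differentiating twice gives $h_{\zeta\bar\zeta} = i(\zeta\partial_\zeta - \bar\zeta\partial_{\bar\zeta})f_{\zeta\bar\zeta} = K(f_{\zeta\bar\zeta}) = 0$, the undifferentiated $f_{\zeta\bar\zeta}$ terms cancelling. Thus $h$ is annihilated by the flat Laplacian $\Delta_0$, and since harmonicity is conformally invariant in two dimensions, $h$ is also harmonic for $g$.

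The main obstacle is that $f$ is not globally defined on $S^2$, so a priori neither is $h$. I expect this to be the delicate point, and it is resolved using the globality condition (\ref{globality}): since $\zeta f_\zeta - \tfrac12$ is a global smooth function on $S^2$ and $f$ is real, one has $h = i[(\zeta f_\zeta - \tfrac12) - \overline{(\zeta f_\zeta - \tfrac12)}] = -2\,\mathrm{Im}(\zeta f_\zeta - \tfrac12)$, which is manifestly globally defined and smooth (the logarithmic singularities of $f$ at the fixed points $\zeta = 0, \infty$ cancel in this combination, consistently across the two charts related by $\eta = 1/\zeta$). Having established that $K(f)$ is a global smooth harmonic function on the compact surface $S^2$, the maximum principle gives that it is constant, completing the proof.
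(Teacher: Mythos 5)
Your harmonicity computation is correct (in the chart adapted to $K$ one indeed has $\partial_\zeta\partial_{\bar\zeta}K(f)=K(f_{\zeta\bar\zeta})=0$, the undifferentiated terms cancelling), and the strategy ``globally smooth harmonic function on compact $S^2$ is constant'' would finish the proof. The genuine gap is the globality step, and it is not a technicality. The condition (\ref{globality}) is established in the paper only for the particular potential built from $f_0-\tfrac12\ln|\zeta|^2+f_1$ in the discussion preceding Proposition \ref{prop1}; the potential appearing in Proposition \ref{prop1} (which is the one the statement refers to) is normalised differently, by setting $C=0$ in (\ref{A0}) so that $\xb$ takes the form (\ref{gandV}), and the two differ by an undetermined (possibly multivalued) harmonic function. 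The paper's remark after Proposition \ref{prop1} flags precisely this issue and only proves that $\zeta f_\zeta$ extends smoothly \emph{away from the zeros of} $X$. This caveat is essential: inverting (\ref{eq:Xf}) gives $\td f=-2\xb/|X|^2$, so $f_\zeta$ genuinely blows up wherever $X$ vanishes, and $X$ does vanish for the solutions of interest --- for the extremal Kerr horizon (\ref{kerr_exact}), $X=0$ at the two poles, and the Proposition \ref{prop1} potential there contains the multivalued term $-2a^2\phi$. Smoothness of $K(f)=-2\,\mathrm{Im}(\zeta f_\zeta)$ at the zeros of $X$ therefore requires showing that the divergence of $\zeta f_\zeta$ at such points is purely real in the $K$-adapted chart; you have not shown this, and it is exactly where the content of the proposition lies.

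There is in fact a near-circularity hidden here. Since $\td f=-2\xb/|X|^2$ and $K$ is Killing, the statement ``$K(f)$ is constant'' is \emph{equivalent} to $[K,X]=0$: this equivalence is the paper's entire (two-line) proof, which simply invokes $[K,X]=0$ from Theorem \ref{maintheo0}(ii) and Lie-derives (\ref{eq:Xf}). The invariance $[K,X]=0$ in two dimensions is the hard part of Theorem \ref{maintheo0}, proven in Theorem \ref{prop2d} by substantive use of the horizon equation (the identity (\ref{eq:id2d}) and the prolongation (\ref{eq:prolong})). Your argument deliberately uses only $\mathcal{L}_Kg=0$, so if it were complete it would constitute an independent soft proof of Theorem \ref{maintheo0}(ii). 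Note that with only that input your scheme does go through for the \emph{other} potential, the one for which (\ref{globality}) is actually proven: there $K(f)$ is globally smooth and harmonic, hence constant (in fact zero, by integrating over $S^2$). But that potential is not tied to $X$ by (\ref{gandV}), so nothing about $[K,X]$ or about the Proposition \ref{prop1} potential follows; all the difficulty has been displaced into the unproven claim that the Proposition \ref{prop1} potential satisfies (\ref{globality}) across the zeros of $X$, and establishing that claim is essentially equivalent to the invariance you are trying to prove.
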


\begin{proof}
By Theorem \ref{maintheo0} we have the  invariance condition $[K, X]=0$.
Using (\ref{eq:Xnorm}) one can rewrite $\xb$ in (\ref{gandV}) in the compact form\footnote{Thus $\td f$ where $f$ is our K\"ahler potential is  Jezierski's variable $\Phi$~\cite{Jezierski}.}
\begin{equation}
\xb= -\tfrac{1}{2} |X|^2 \td f .  \label{eq:Xf}
\end{equation} 
Therefore, $\mathcal{L}_K X=0$ implies invariance of  $\td f$ along a Killing field $K$.  
Conversely, we can invert the previous relation so $X^\flat= -2 \td f/|\td f|^2$, which implies that if $\td f$ is Lie-derived by a Killing field $K$ then so is $X^\flat$.  We conclude that  $[K, X]=0$ iff ${\mathcal L}_K(\td f)=\td (K(f))=0$ or equivalently if $K(f)$ is a constant. 
\end{proof}

\begin{remark}  
If we simply {\it assume} that $(M,g)$ admits a Killing field $K$  then the above gives a purely local  proof that $[K,X]=0$  is equivalent to $K(f)$ being a constant.
\end{remark}

We shall now impose the symmetry condition on the 4th order PDE (\ref{pde}), and find its general solution on $M=S^2$.
\begin{prop}
\label{propS}
If $(M, g)$ from Proposition \ref{prop1} admits a Killing vector $K=i(\zeta\p_\zeta-\zb\p_{\zb})$
then there exist coordinates $(x, \phi)$ on $M$, and a function $B=B(x)$ such that
\be
\label{B_met}
g=B^{-1}dx^2+Bd\phi^2, \quad \xb=-\frac{1}{2}\Big(\frac{dx+iBd\phi}{x/2+\ov{H}}+
\frac{dx-iBd\phi}{x/2+{H}}\Big),
\ee
where $H=H(\zeta)$ is holomorphic. Moreover if $M=S^2$ and $X$ is non-gradient,  
then $H$ is a constant, and 
the 4th order PDE (\ref{pde}) reduces to the linear ODE
\be
\label{b_eq}
B''+\frac{2}{(\beta^2+x^2)^2}\Big(x(\beta^2+x^2)B'+2\beta^2B\Big)+2\lambda=0,
\ee
where $\beta\neq 0$ is a  constant, 
which can be solved explicitly, and $X$ is given by
\begin{equation}
\xb=  \frac{2\beta B(x)\td\phi}{\beta^2+x^2}-  \frac{2x\td x}{\beta^2+x^2} . \label{eq:X2d}
\end{equation}
\end{prop}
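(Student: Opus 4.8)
The plan is to use the assumed $U(1)$ symmetry to reduce the K\"ahler description of Proposition~\ref{prop1} to ordinary differential equations, the organising observation being that a suitable ``moment'' combination of the K\"ahler potential is holomorphic, which is what collapses the expression (\ref{gandV}) into the stated normal form.

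First I would fix adapted coordinates: writing $\zeta = r e^{i\phi}$ so that $K=\partial_\phi$, axisymmetry of $g$ forces $f_{\zeta\bar\zeta}$ to depend only on $\rho:=|\zeta|^2$, as recorded in (\ref{Kandf}). Defining $x$ by $\td x = 4 r f_{\zeta\bar\zeta}\,\td r$ (a legitimate monotonic reparametrisation since $f_{\zeta\bar\zeta}>0$) and substituting into $g=4f_{\zeta\bar\zeta}(\td r^2 + r^2\td\phi^2)$ gives $g = B^{-1}\td x^2 + B\,\td\phi^2$ with $B:=4r^2 f_{\zeta\bar\zeta}$, a function of $x$ alone. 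The key step is then that $\zeta f_\zeta - \tfrac12 x$ is holomorphic: one computes $\partial_{\bar\zeta}(\zeta f_\zeta)=\zeta f_{\zeta\bar\zeta}$ and, using $\td x/\td\rho = 2 f_{\zeta\bar\zeta}$, also $\partial_{\bar\zeta}(\tfrac12 x)=\zeta f_{\zeta\bar\zeta}$, so the difference is annihilated by $\partial_{\bar\zeta}$. Hence $\zeta f_\zeta = \tfrac12 x + H(\zeta)$ with $H$ holomorphic and $f_\zeta=(\tfrac12 x+H)/\zeta$. Feeding this and its conjugate into (\ref{gandV}), and rewriting $\bar\zeta\,\td\zeta = (4f_{\zeta\bar\zeta})^{-1}(\td x + iB\,\td\phi)$ and $\zeta\,\td\bar\zeta=(4f_{\zeta\bar\zeta})^{-1}(\td x-iB\,\td\phi)$ (which follow from $\tfrac12\td\rho = \td x/(4f_{\zeta\bar\zeta})$ and $\rho\,\td\phi = B\,\td\phi/(4f_{\zeta\bar\zeta})$), all factors of $f_{\zeta\bar\zeta}$ cancel and one obtains exactly (\ref{B_met}). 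This proves the first assertion, with $H$ a priori an arbitrary holomorphic function (the formulae hold away from the zeros of $X$, where regularity is checked separately).

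For the refinement on $M=S^2$ I would upgrade $H$ to a constant. Since $X$ is non-gradient, the metric is not of constant curvature (Remark~\ref{constR}), so its isometry group is one-dimensional and the given $K$ agrees up to scale with the Killing field of Theorem~\ref{maintheo0}(ii); hence $[K,X]=0$ and, by Proposition~\ref{prop:inherit}, $K(f)$ is constant. A one-line computation gives $K(f)=-2\,\mathrm{Im}(\zeta f_\zeta)=-2\,\mathrm{Im}\,H$, so $\mathrm{Im}\,H$ is constant and therefore the holomorphic $H$ is constant. Translating $x$ absorbs $\mathrm{Re}\,H$, leaving $H=\tfrac{i}{2}\beta$ with $\beta\in\R$; substituting into (\ref{B_met}) and clearing the now-constant denominators $\tfrac12 x\pm\tfrac{i}{2}\beta$ yields precisely (\ref{eq:X2d}). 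If $\beta=0$ then $\xb=-2\,\td x/x$ is exact, contradicting that $X$ is non-gradient, so $\beta\neq 0$.

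It remains to extract the ODE for $B$ from the one component of (\ref{eq:HEE}) not yet used, namely the $(\zeta\bar\zeta)$ component, equivalently the trace, equivalently (\ref{pde}). Using the Gaussian-curvature formula for $g=B^{-1}\td x^2 + B\,\td\phi^2$, which gives $R=-B''$, together with $|X|^2 = 4B/(\beta^2+x^2)$ and $\nabla_a X^a = \partial_x X^x$ (as $\sqrt{\det g}=1$) read off from (\ref{eq:X2d}), the trace equation $R=\tfrac12|X|^2-\nabla_a X^a + 2\lambda$ rearranges directly into (\ref{b_eq}). This is a linear inhomogeneous second-order equation, and since $B=x/(\beta^2+x^2)$ solves its homogeneous part, reduction of order and variation of parameters produce $B$ in closed form, recovering the extremal Kerr horizon. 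The genuine content, as opposed to routine computation, lies in two structural inputs: the holomorphicity of $\zeta f_\zeta - \tfrac12 x$, without which (\ref{gandV}) does not visibly simplify, and the appeal to Theorem~\ref{maintheo0}(ii) via Proposition~\ref{prop:inherit}, which is what promotes $H$ from a holomorphic function to a constant and hence is where compactness and the rigidity theorem, rather than mere symmetry, enter.
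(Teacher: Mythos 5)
Your argument is essentially the paper's, with two streamlinings that both check out. Where the paper solves the axisymmetry condition as $f=\rho(r)+A(\zeta)+\bar{A}(\bar{\zeta})$, passes to $\zeta=e^{s+i\phi}$ and performs the reciprocal swap (\ref{def_of_B}), you introduce $x$ directly via $\td x=4rf_{\zeta\bar{\zeta}}\,\td r$ and observe $\partial_{\bar{\zeta}}\big(\zeta f_\zeta-\tfrac12 x\big)=0$, producing the holomorphic $H$ in one line; this is the same object as the paper's $\zeta A_\zeta$. Likewise, the paper obtains (\ref{b_eq}) by pushing (\ref{pde}) through the reciprocal transformation, whereas you take the trace of (\ref{eq:HEE}); since (\ref{gandV}) already encodes the $(\zeta\zeta)$ and $(\bar{\zeta}\bar{\zeta})$ components, the trace is exactly the residual equation, and your data $R=-B''$, $|X|^2=4B/(\beta^2+x^2)$, $\nabla_aX^a=\partial_xX^x$ do reproduce (\ref{b_eq}) (and $x/(\beta^2+x^2)$ is indeed a solution of its homogeneous part).

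The gap is the sentence ``since $X$ is non-gradient, the metric is not of constant curvature (Remark~\ref{constR})''. Remark \ref{constR}, and the dichotomy in the proof of Theorem \ref{prop2d}, only show that $R\equiv 2\lambda/3$ forces $X$ to be a gradient; they say nothing about constant curvature $c\neq 2\lambda/3$, so the citation does not support the claim. This matters because it is your sole justification for identifying the \emph{given} rotational $K$ with the Killing field of Theorem \ref{maintheo0}(ii): if a constant-curvature solution with non-gradient $X$ existed, $g$ would have a three-dimensional isometry group and the given $K$ need not satisfy $[K,X]=0$ --- exactly the phenomenon exhibited in Remark \ref{constR}, where $\partial_\phi$ fails to preserve $X$ --- and then constancy of $K(f)$, hence of $H$, is lost. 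The needed fact is true, but establishing it requires a separate argument (for instance, from (\ref{eq:prolong}) and the $\Omega$-identity in the proof of Theorem \ref{prop2d} one can derive a pointwise relation between $\Omega^2$ and $|X|^2$ when $R$ is constant, and then use the fact that $X$ must vanish somewhere on $S^2$ to force $\Omega\equiv 0$); deducing it from the final classification would be circular. The economical fix is the paper's own move: do not attempt to upgrade an arbitrary given $K$, but invoke Theorem \ref{maintheo0}(ii) at the outset, take $K$ to be the Killing field it supplies (on compact $M$ its flow closes up to a circle action, so holomorphic coordinates with $K=i(\zeta\partial_\zeta-\bar{\zeta}\partial_{\bar{\zeta}})$ exist), and then $[K,X]=0$ holds by construction and Proposition \ref{prop:inherit} applies. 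With that substitution the rest of your proof goes through unchanged.
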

 \begin{proof} 
 The Killing equation ${\mathcal L}_Kg=0$ with $K$ as in (\ref{Kandf}) gives $\zb f_{\zeta\zb\zb}-\z f_{\zeta\zb \zeta}=0$.
The solution to this is
\[
f=\rho(r)+A(\zeta)+\bar{A}(\zb), \quad \mbox{where}\quad r=|\zeta|, 
\]
and $\rho(r)$, $A(\z)$, $\bar{A}(\zb)$ are integration functions. 
Now set $\zeta=e^{s+i\phi}$, where $s$ and $\phi$ are real. Moreover set $x=d \rho/d s$, and find that
\[
g=\frac{\td x}{\td s}\Big(\td s^2+\td\phi^2), \quad
\xb=-\frac{1}{2}\frac{\td x}{\td s}\Big(\frac{\td s+i\td \phi}{x/2+\zb A_{\zb}} + \frac{\td s-i\td \phi}{x/2+\zeta 
A_{\zeta}}\Big).
\]
Motivated by Sylvester's theory of reciprocants \cite{sylvester}, 
swap the dependent and independent variables, and regard $s$ as a function 
of $x$. Setting 
\be
\label{def_of_B}
B=\Big(\frac{\td s}{\td x}\Big)^{-1}
\ee
puts $(g, X)$  in the form (\ref{B_met}),
where $H(\zeta)=\zeta A_{\zeta}$. 

Now if $M=S^2$ and $X$ is non-gradient, we may take $K$ to be the Killing field guaranteed by Theorem \ref{maintheo0}, and Proposition \ref{prop:inherit} shows that $K(f)$ is a constant. From above we have $\z f_\z =\tfrac{1}{2}x+\z A_\z$ and so $K(f)= i (\z A_\z - \zb \bar{A}_{\zb})$ is a constant.
Thus
$\zeta A_{\zeta}$ is both holomorphic and antiholomorphic so must be  
a constant.  Set $H=\zeta A_{\zeta}=(\alpha+i\beta)/2$, where $\alpha, \beta$ are real constants. Now 
\[
\xb=\frac{-2}{(x+\alpha)^2+\beta^2}\Big((x+\alpha) \td x-B\beta \td\phi\Big).
\]
The constant $\alpha$ may be set to $0$ by translating the function $x$ giving (\ref{eq:X2d}). Note that $X^\flat$ is exact iff $\beta=0$.
The PDE (\ref{pde})
reduces to an ODE 
which linearises under the reciprocal transformation (\ref{def_of_B}). The resulting equation
for $B=B(x)$ is (\ref{b_eq}).
\end{proof}

\begin{remark}
The above proof reveals that $\z f_\z$ is a global function on $S^2$ inline with (\ref{globality}). This follows from the fact that $x$ is globally defined which can be seen as follows. The 1-form $\iota_{K} \epsilon_g$, where  $\epsilon_g$ is the volume form of $g$, must be closed and hence exact on $M=S^2$, see e.g.~\cite{Kunduri:2008rs}. Computing this using (\ref{B_met}) gives $\td x  = -\iota_{K} \epsilon_g$ and thus we deduce $x$ is indeed global.
\end{remark}

The general solution to (\ref{b_eq}) is 
\[
B(x)=\frac{b_1(\beta^2-x^2)+b_2 \beta x -\frac{\Lambda}{3}(x^4+3\beta^4)}{\beta^2+x^2}
\]
if $\beta\neq 0$, or  $B=b_1+b_2/x-\Lambda x^2/3$ if $\beta=0$. The latter corresponds to $\xb$ being exact and hence this solution does not extend to a smooth metric on a compact $M$ (it is in fact the general static solution)~\cite{Chrusciel:2005pa}. The $\beta\neq 0$ solution with 
$\Lambda=0, b_1=2, b_2=0, \beta=1$ gives the extremal Kerr near-horizon geometry (\ref{kerr_exact}) with $a^2=1/2$. Moreover, the $\beta \neq 0$ solution can be shown to extend to a smooth metric on $S^2$ iff it is isometric to the extreme Kerr horizon  (\ref{kerr_exact})  or its counterpart with a cosmological constant~\cite{Lewandowski:2002ua, Kunduri:2008rs, Kunduri:2013gce}.    \\

\noindent {\bf Acknowledgements.}  We are grateful to Alex Colling, David Katona and Paul Tod for discussions. 
JL acknowledges support by the Leverhulme Trust Research Project Grant RPG-2019-355. We thank the anonymous referees for comments that led to improvements and in particular to a corrected version of Theorem \ref{cor2}.

\appendix

\section{Details of proof of Proposition \ref{prop}}
\label{app:mainidentity}

Here we give more details on the proof of our main identity \eqref{eq:identity}.  The trace of \eqref{eq:nhE2} gives the scalar curvature
\be
R=  S + n\lambda  \; ,
\ee
hence the Einstein tensor is
\be
R_{ab}-\tfrac{1}{2} R g_{ab} =  \frac{K_a K_b }{2\Gamma^2} -\frac{(\nabla_a \Gamma)( \nabla_b\Gamma)}{2 \Gamma^2}- \frac{1}{\Gamma}\nabla_{(a} K_{b)}+\frac{1}{\Gamma} \nabla_a\nabla_b \Gamma+\lambda \left(1-\frac{n}{2} \right)  g_{ab}  - \tfrac{1}{2} g_{ab} S  \; , \label{eq:contbianchi}
\ee
where for convenience we have defined the function
\be
S:=  \frac{1}{2\Gamma^2} |K|^2 - \frac{1}{2\Gamma^2} |\nabla \Gamma|^2-\frac{1}{\Gamma} \nabla_c K^c  +\frac{1}{\Gamma} \Delta \Gamma    \; . \label{eq:Gdef}
\ee
The contracted Bianchi identity $\nabla^a(R_{ab}-\tfrac{1}{2} g_{ab}R)=0$ implies that the r.h.s. of \eqref{eq:contbianchi} is divergence-free and this condition can be rearranged to give an equation for $\nabla^a\nabla_{(a} K_{b)}$ which is \eqref{eq:calc1}.

The term  on the r.h.s. of \eqref{eq:calc1} that is a triple derivative in $\Gamma$ can be evaluated using $[\nabla^a,\nabla_b] \nabla_a \Gamma = R_{ab} \nabla^a \Gamma$ and \eqref{eq:nhE2} which gives
\begin{align*}
\nabla^a\nabla_b \nabla_a \Gamma  &=  \frac{1}{2\Gamma^2} (K^a\nabla_a \Gamma) K_b- \frac{1}{2\Gamma^2} | \nabla \Gamma |^2 \nabla_b \Gamma -\frac{1}{\Gamma} (\nabla^a\Gamma) \nabla_{(a} K_{b)} \\ &+ \frac{1}{\Gamma}( \nabla^a\Gamma) \nabla_a \nabla_b\Gamma +\nabla_b \Delta \Gamma+\lambda \nabla_b \Gamma
\end{align*}
and substituting this into \eqref{eq:calc1} one finds a few cancellations leaving
\begin{align}
 \nabla^a \nabla_{(a} K_{b)}  &=\frac{1}{2\Gamma} K_b \nabla_c K^c +\frac{1}{2\Gamma} K^a \nabla_a K_b  - \frac{1}{2\Gamma^2} K_b K^c \nabla_c\Gamma - \frac{1}{2\Gamma} (\Delta \Gamma)  \nabla_b \Gamma  \label{eq:calc2}  \\  \nonumber
 &-\frac{1}{2\Gamma} (\nabla^a\Gamma)\nabla_a \nabla_b\Gamma  + \frac{1}{2\Gamma^2} | \nabla \Gamma|^2 \nabla_b \Gamma  +\nabla_b \Delta \Gamma+\lambda \nabla_b \Gamma -\tfrac{1}{2} \Gamma \nabla_b S\;.
\end{align}
Now contracting this with $K^b$ gives
\begin{align}
K^b \nabla^a \nabla_{(a} K_{b)}  & =K^a \nabla_a\Gamma \left(-\frac{1}{2\Gamma^2} |K|^2  -\frac{1}{2\Gamma} \Delta \Gamma+\frac{1}{2\Gamma^2} |\nabla \Gamma |^2+ \lambda \right) + \frac{1}{2\Gamma} |K|^2 \nabla_c K^c \label{eq:calc3}  \\
& + \frac{1}{4\Gamma} K^a\nabla_a |K|^2 -\frac{1}{4\Gamma} K^b\nabla_b |\nabla \Gamma|^2  + K^b\nabla_b \Delta \Gamma - \tfrac{1}{2} \Gamma K^b \nabla_b S \; . \nonumber  
\end{align}
We can rewrite each term on the r.h.s. of \eqref{eq:calc3} as a sum of a total divergence, a term proportional to $\nabla_aK^a$ and a term proportional to $K^a \nabla_a\Gamma$. The terms on first line are already of this form. Now consider the second line of \eqref{eq:calc3}: the first term  can be rewritten as
\begin{align*}
\frac{1}{4\Gamma} K^a\nabla_a |K|^2= \nabla^a \left( \frac{1}{4\Gamma} K_a |K|^2 \right) - \frac{1}{4\Gamma}|K|^2 \nabla_a K^a+ \frac{1}{4\Gamma^2} |K|^2 K^a\nabla_a\Gamma  \; ,
\end{align*}
the second term as
\begin{align*}
-\frac{1}{4\Gamma} K^b\nabla_b |\nabla \Gamma|^2 =  -\nabla^b \left( \frac{1}{4\Gamma} K_b |\nabla \Gamma|^2\right)+  \frac{1}{4\Gamma} (\nabla_b K^b) |\nabla \Gamma|^2 -  \frac{1}{4\Gamma^2} (K^b\nabla_b\Gamma) |\nabla \Gamma|^2  \; ,
\end{align*}
the third term as
\begin{align*}
K^b\nabla_b \Delta \Gamma = \nabla^b( K_b \Delta \Gamma)- (\nabla_bK^b) \Delta \Gamma
\end{align*}
and the fourth term as
\begin{align*}
 - \tfrac{1}{2} \Gamma K^b \nabla_bS  = -\nabla^b \left( \tfrac{1}{2} \Gamma K_b S \right)+ \tfrac{1}{2} S \nabla_b(\Gamma K^b)   \; .
\end{align*}  
Substituting each of these into \eqref{eq:calc3} and collecting like terms gives
\begin{align*}
K^b \nabla^a \nabla_{(a} K_{b)}  & =K^a \nabla_a\Gamma \left(-\frac{1}{4\Gamma^2} |K|^2  -\frac{1}{2\Gamma} \Delta \Gamma+\frac{1}{4\Gamma^2} |\nabla \Gamma |^2+ \lambda +\tfrac{1}{2}S \right) \\
&  + \nabla_cK^c \left(  \frac{1}{4\Gamma} |K|^2  +  \frac{1}{4\Gamma}  |\nabla \Gamma|^2 -\Delta \Gamma +\tfrac{1}{2}\Gamma S \right)  \nonumber \\
& + \nabla^b \left( \frac{1}{4\Gamma} K_b |K|^2-  \frac{1}{4\Gamma} K_b |\nabla \Gamma|^2  + K_b \Delta \Gamma- \tfrac{1}{2} \Gamma K_b S \right)   \\
&= K^a \nabla_a\Gamma   \left( \lambda -\frac{1}{2\Gamma} \nabla_a K^a \right)  +\nabla_c K^c \left( \frac{1}{2\Gamma}|K|^2-\frac{1}{2}\Delta \Gamma- \frac{1}{2} \nabla_aK^a \right) \\ &+ \nabla^b  \left( \tfrac{1}{2}  K_b \Delta \Gamma+\tfrac{1}{2}K_b \nabla_a K^a \right) \; ,
\end{align*}
where the second equality follows from using \eqref{eq:Gdef}.  Notice that multiple cancellations have occurred, particularly for the terms proportional to $K^a\nabla_a \Gamma$.  In fact, we can get rid of the terms proportional to $K^a\nabla_a \Gamma$ completely by rewriting the term proportional to the cosmological constant using $K^a\nabla_a\Gamma= \nabla^a(K_a \Gamma)-\Gamma \nabla_aK^a$.  This results in
\begin{align}
K^b \nabla^a \nabla_{(a} K_{b)}  & =\nabla_c K^c \left(-\lambda \Gamma -\frac{1}{2\Gamma} K^a\nabla_a\Gamma+ \frac{1}{2\Gamma}|K|^2-\frac{1}{2}\Delta \Gamma- \frac{1}{2} \nabla_aK^a \right) \\ &+ \nabla^b  \left( \lambda \Gamma K_b+ \tfrac{1}{2}  K_b \Delta \Gamma+\tfrac{1}{2}K_b \nabla_a K^a \right) \; .  \nonumber
\end{align}
Finally, substituting this into \eqref{eq:Killingeqsq} gives the claimed identity \eqref{eq:identity}.

\section{Details of proof of Theorem \ref{prop2d}}
\label{app:id2d}

Here we give more details on the proof of the identity \eqref{eq:id2d}.  Since $K^a$ is a Killing field and the dimension $n=2$ we can write \eqref{eq:nhE2} as
\be
\Gamma \left( \tfrac{1}{2} R-\lambda \right)g_{ab}= \frac{K_a K_b}{2\Gamma}- \frac{(\nabla_a \Gamma)( \nabla_b \Gamma)}{2\Gamma}+ \nabla_a\nabla_b \Gamma  \;.  \label{eq:nhE3}
\ee
Now acting on both sides with $\epsilon^{ac}\nabla_c$ gives
\begin{align}
\epsilon_{bc}\nabla^c \left( \Gamma \left( \tfrac{1}{2} R-\lambda \right) \right) = \epsilon^{ac}\nabla_c \left( \frac{K_a K_b}{2\Gamma}\right)  - \frac{1}{2\Gamma} \epsilon^{ac}( \nabla_a\Gamma )\nabla_c\nabla_b\Gamma + \epsilon^{ac} \nabla_c \nabla_a \nabla_b\Gamma  \; .  \label{eq:calc4}
\end{align}
The term with a triple derivative in $\Gamma$ evaluates to
\begin{align*}
\epsilon^{ac} \nabla_c \nabla_a \nabla_b\Gamma = \tfrac{1}{2}\epsilon^{ac} [\nabla_c, \nabla_a ]\nabla_b\Gamma = -  \tfrac{1}{2}\epsilon^{ac}R^d_{~bca}\nabla_d\Gamma = - \frac{1}{2}R \epsilon_{bc}\nabla^c \Gamma
\end{align*}
where in the final equality we used the 2-dimensional identity $R_{dbca}= \tfrac{1}{2} (g_{dc} g_{ba}-g_{da} g_{bc})$.  Now return to \eqref{eq:calc4}, substitute for the triple derivative in $\Gamma$ and substitute for $\nabla_b \nabla_c\Gamma$ using \eqref{eq:nhE3}, to get
\begin{align*}
\Gamma \epsilon_{bc}\nabla^c  \left( \tfrac{1}{2} R-\lambda \right) + \tfrac{1}{2} \epsilon_{bc} (\nabla^c \Gamma) \left( \tfrac{3}{2}R-\lambda\right) = \epsilon^{ac}\nabla_c \left( \frac{K_a K_b}{2\Gamma}\right)+ \frac{1}{4\Gamma^2} \epsilon^{ac} (\nabla_a \Gamma) K_c K_b   \; .  \label{eq:calc5}
\end{align*}
Now contract this with $\epsilon^b_{~d}$, use $\epsilon^{bd}\epsilon^{ac}=g^{ba}g^{dc}-g^{bc}g^{da}$ and simplify to find
\begin{align*}
\frac{2}{3}\Gamma \nabla_d R+  \left( R-\frac{2\lambda}{3} \right)\nabla_d\Gamma = \nabla_d\left( \frac{|K|^2}{\Gamma} \right)+ \frac{1}{\Gamma^2} K_d K^c \nabla_c \Gamma  \; .
\end{align*}
The r.h.s of this is $2 K^b \nabla_{[d} X_{b]}$, where recall $X_b$ is related to $K_b$ by \eqref{eq:Xdecomp}, and hence we obtain \eqref{eq:id2d}.

\end{document}